\documentclass[]{scrartcl}
\usepackage{amsmath,amssymb,amsfonts}
\usepackage{amsthm}
\usepackage{algorithmic}
\usepackage{graphicx}
\usepackage{textcomp}
\usepackage{xcolor}
\usepackage[style=ieee,backend=biber,url=false,doi=false,isbn=false,date=year,eprint = true, citestyle=numeric-comp,maxbibnames=4,minbibnames=4,maxcitenames=4,mincitenames=4]{biblatex}
\usepackage{booktabs}
\usepackage{dsfont}
\usepackage[hidelinks]{hyperref}
\usepackage{subcaption}

\usepackage{multirow}

\bibliography{alexLit}

\newtheorem{remark}{Remark}

\newcommand{\Fp}[1]{{\color{black} #1}}
\newcommand{\Ss}[1]{{\color{black} #1}}

\newcommand{\Aee}[1]{{\color{red} }}

\usepackage[ ruled]{algorithm2e}

\begin{document}
\title{Scalable Primal Decomposition Schemes for Large-Scale Infrastructure Networks}
\author{Alexander Engelmann, Sungho Shin, \\ François Pacaud, and Victor M. Zavala
\thanks{This work was supported by the U.S. Department of Energy, Office of Science, Advanced Scientific Computing Research, under contract number DE-AC02-06CH11357. VM Zavala also acknowledges partial support from the U.S. National Science Foundation under award CBET-2315963.
This research was conducted while AE  was with TU Dortmund University, Dortmund, Germany (alexander.engelmann@ieee.org). SS is with Massachusetts Institute of Technology, Cambridge, MA, USA (sushin@mit.edu). FP is with Centre Mathématiques et Systèmes, Mines Paris-PSL, Paris, France (francois.pacaud@minesparis.psl.eu). VZ is with University of Wisconsin-Madison, Madison, WI, USA and Argonne National Laboratory, Lemont, IL, USA (victor.zavala@wisc.edu).}}

\newtheorem{ass}{Assumption}
\newtheorem{lem}{Lemma}
\newtheorem{thm}{Theorem}

  \renewcommand{\algorithmautorefname}{Algorithm}
  \renewcommand{\appendixautorefname}{Appendix}

\setlength{\textfloatsep}{5pt}

\makeatletter
\newcommand{\thickhline}{%
	\noalign {\ifnum 0=`}\fi \hrule height 1pt
	\futurelet \reserved@a \@xhline
}
\makeatother

\maketitle

\begin{abstract}
  The  operation of large-scale infrastructure networks requires  scalable optimization schemes. 
  To guarantee safe system operation, a high degree of feasibility in a small number of iterations is important. Decomposition schemes can help to achieve scalability. In terms of feasibility, however,  classical  approaches such as the alternating direction method of multipliers (ADMM)  often converge slowly. In this work, we present  primal decomposition schemes for hierarchically structured strongly convex QPs.
  These schemes offer high degrees of feasibility in a small number of iterations in combination with global convergence guarantees. We benchmark their performance against the centralized off-the-shelf interior-point solver Ipopt and  ADMM on problems with up to 300,000 decision variables and constraints. We find that the proposed approaches solve problems as fast as Ipopt, but with reduced communication and without requiring a full model exchange. Moreover, the proposed schemes achieve a higher accuracy than ADMM.
\end{abstract}

\section{Introduction} \label{sec:intro}

The operation of  infrastructure networks such as power systems, district heating grids or gas networks is challenging. In many cases, these networks are large and composed of many complex subsystems such as lower-level networks or buildings. Operation  is often  based on numerical optimization  due to its flexibility  and recent advances in solver development, which allows to solve  large-scale problems quickly and to a high accuracy. For large networks, however, a centralized solution is often not desirable since,  a), the problem becomes computationally challenging, even with state-of-the-art solvers; b), information collection in a central entity should be avoided due to confidentiality and privacy concerns, and, c), the responsibility for operation and updates in modeling should stay mainly in the subsystems.

One line of research addresses the above challenges via aggregation.
Here, the idea is to simplify the subproblems by projecting the constraint set on the coupling variables of the infrastructure network.
Examples for this can be found for power systems~\cite{Capitanescu2018,Kalantar-Neyestanaki2020}.
A drawback of this approach is a loss of optimality.
Moreover, aggregation is often not straightforward, feasibility is hard to guarantee and disaggregation requires solving additional local optimization problems.

A second line of research is based on distributed optimization. Prominent approaches are primal and dual first-order algorithms such as Lagrangian dual decomposition, the Alternating Direction Method of Multipliers~(ADMM) \cite{Everett1963,Boyd2011}, and primal (sub)-gradient-based schemes \cite{Nedic2017,Ryu2022}. Application examples range from the operation of power systems \cite{Erseghe2014,Kim2000}, over gas networks~\cite{Shin2021}, district heating systems \cite{Huang2017,Cao2019}, to water networks~\cite{Coulbeck1988}.
With their at most linear rate of convergence, these approaches  often require many iterations to converge even for a modest solution quality.
This is often prohibitive for real-time implementation.

Distributed second-order methods exhibit faster convergence.
Here, classical approaches aim at decomposing the block-structure of the Karush-Kuhn-Tucker~(KKT) system within interior-point algorithms \cite{Chiang2014,Zavala2008a} or sequential quadratic programming \cite{Varvarezos1994}.
Alternative second-order methods based on augmented Lagrangians can be found in \cite{Engelmann2019c,Houska2016}. These approaches typically require an expensive central coordination, although it is possible to partially alleviate computation by decentralizing the Newton-steps~\cite{Engelmann2020b,Engelmann2021a, Stomberg2022a}.

Primal decomposition schemes come with the advantage  a high degree of feasibility and optimality in a small number of iterations \cite{Geoffrion1970,DeMiguel2006,DeMiguel2008}.
	For achieving this, they  require a hierarchical problem structure, i.e. a star  as the underlying graph.
In this sense, they are more restrictive than the aforementioned approaches.
In infrastructure networks hierarchical problem structures are common, however.
The main idea of primal decomposition is to  construct lower-level problems coordinated by one upper-level problem, where the upper-level problem considers the lower-level problems  by their optimal value functions.
Primal decomposition has been very successful in solving large-scale problems from chemical engineering \cite{Zavala2008,Yoshio2021} and some of the largest Quadratic Programs~(QPs) and Nonlinear Programs (NLPs) from power systems \cite{Tu2021,Petra2021,Curtis2021}. Moreover, primal decomposition  allows to use specialized, domain-specific solvers to solve the subproblems and the master problem efficiently \cite{DeMiguel2006}.

In this work, we propose two primal decomposition schemes for solving large-scale strongly convex QPs, with global convergence guarantees.
Both methods rely respectively on augmented Lagrangians and exact  $\ell 1$-penalties for ensuring feasibility in the subproblems.
Similar $\ell1$-penalty based approaches have been proposed in previous works \cite{DeMiguel2006,Tu2021}.
In contrast to \cite{Tu2021}, our work is not restricted to a specific application and can be used on any strongly convex hierarchically structured QP.
The augmented-Lagrangian framework is new to the best of our knowledge.
We show that the augmented Lagrangian formulation exhibits improved performance compared to the $\ell$1 formulation. Moreover, we demonstrate that the algorithms are faster than off-the-shelf interior-point solvers.
We benchmark our algorithms against a distributed ADMM and the nonlinear solver \texttt{Ipopt}.
As benchmarks, we consider the operation of HVAC systems in a city district with a variable number of buildings and with up to $300,000$ decision variables and inequality constraints and two Optimal Power Flow problems with up to 7,852 buses.

\subsubsection*{Notation}
Given $A\in \mathbb{R}^{m\times n}$, $[A]_j$ denotes the $j$th row of $A$ and $\operatorname{nr}(A)\doteq  m$ corresponds to the number of rows of $A$.
The Lagrange multiplier $\lambda\in \mathbb{R}^{n_g}$ associated to a constraint $g : \mathbb{R}^{n_x} \rightarrow \mathbb R^{n_g}$ is written as  $g(x) = 0 \;|\; \lambda$.
Given a vector $v\in \mathbb R^n$, $D= \operatorname{diag}(v) \in \mathbb{R}^{n\times n}$ denotes a matrix with the elements of $v$ on the main diagonal.
For a tuple of matrices $(A,B)$,
$\operatorname{blkdiag}(A,B)$ denotes their block-diagonal concatenation.

\section{Problem Formulation} \label{sec:probForm}

\begin{figure}
	\centering
	\includegraphics[width=.4\linewidth]{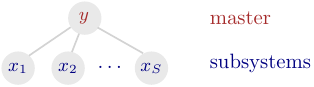}
	\caption{Star graph of problem \eqref{eq:sepQP}.}
	\label{fig:stargraph}
\end{figure}

Many infrastructure network problems can be formulated as strongly convex QPs over a set of subsystems $\mathcal S = \{1,\dots,S\}$,
\begin{subequations}\label{eq:sepQP}
	\begin{align}
		\min_{\{x_i\}_{i\in \mathcal S},y}\sum_{i \in \mathcal S}
		\frac{1}{2}
		\begin{bmatrix}
			x_i \\ y
		\end{bmatrix}^	{\hspace{-.2em}\top}
		\hspace{-.2em}
		\begin{bmatrix}
			H_i^{xx} & H_i^{xy} \\
			H_i^{xy\top} & H_i^{yy}
		\end{bmatrix}
		\begin{bmatrix}
			x_i \\ y
		\end{bmatrix}&
	\hspace{-.2em}
		+
			\hspace{-.2em}
		\begin{bmatrix}
			h_i^x \\ h_i^y
		\end{bmatrix}^	{\hspace{-.2em}\top}
		\hspace{-.2em}
		\begin{bmatrix}
			x_i \\ y
		\end{bmatrix} \label{eq:sepCost}
		\\
		\text{subject to } \;
		\begin{bmatrix}
			A_i^x & A_i^y
		\end{bmatrix}
		\begin{bmatrix}
			x_i^\top & y^\top
		\end{bmatrix}^\top - b_i &= 0,\;\;  i \in \mathcal S, \label{eq:localEq}\\
		\begin{bmatrix}
			B_i^x & B_i^y
		\end{bmatrix}
		\begin{bmatrix}
			x_i^\top & y^\top
		\end{bmatrix}^\top -d_i&\leq 0,\;\;  i \in \mathcal S, \label{eq:localIneq}\\
		A^y y -b^y = 0, \quad 		B^y y -d^y &\leq  0\label{eq:GlobalIneq}.
	\end{align}
\end{subequations}
Here, the global decision  vector $x=[x_1,\dots,x_S]^\top $ is composed of local decision variables $x_i \in \mathbb R^{n_{xi}}$, where each $x_i$ belongs to one subsystem $i \in \mathcal S$.
The decision variables  $y\in \mathbb R^{n_y}$  are ``global'' in the sense that they belong to the interconnecting infrastructure network, described by the constraints \eqref{eq:GlobalIneq}.
Each coefficient matrix/vector in  the objective \eqref{eq:sepCost} and the constraints \eqref{eq:localEq}, \eqref{eq:localIneq} belongs to one  $i \in \mathcal S$.

Observe that problem \eqref{eq:sepQP} is defined over a star graph, where $y$ and constraint \eqref{eq:GlobalIneq} correspond to the root vertex, and $x_i$ and constraints \eqref{eq:localEq}, \eqref{eq:localIneq} belong/couple the root vertex to all leafs (\autoref{fig:stargraph}).
This structure is common in many infrastructure networks such as electricity grids, gas networks or district heating systems, which are composed of a network as the root and complex subsystems such as households, distribution grids or industrial facilities as leafs \cite{Erseghe2014,Kim2000,Shin2021,Huang2017,Cao2019,Coulbeck1988}.
These applications often require a high degree of feasibility in a small number of iterations without full model exchange. The main objective of this work is to develop primal decomposition schemes able to achieve that goal.


\section{Primal Decomposition Schemes} \label{sec:primDec}
In contrast to  duality-based techniques such as ADMM or  dual decomposition,
primal decomposition decomposes  entirely in the primal space, i.e. no dual variables are updated in the solution process.
The main idea here is to replace the subproblems in \eqref{eq:sepQP} by their optimal value functions.
Specifically, one reformulates \eqref{eq:sepQP} as
\begin{align} \label{eq:valFunReform}
	\min_{y} \sum_{i \in \mathcal S} \;\phi_i(y), \;\; \text{ s.t. }\;\; 	A^y y -b^y = 0,\;
	B^y y -d^y \leq  0,
\end{align}
where for all $i \in \mathcal S$, the {value function} $\phi_i$ is defined as
\begin{subequations} \label{eq:ValueFun}
	\begin{align}
		\phi_i(y)\hspace{-.1em} \doteq \hspace{-.1em}	\min_{x_i}
		\frac{1}{2}
		\hspace{-.1em}
		\begin{bmatrix}
			x_i \\ y
		\end{bmatrix}^\top
	\hspace{-.3em}
		\begin{bmatrix}
			H_i^{xx} & H_i^{xy} \\
			H_i^{xy\top} & H_i^{yy}
		\end{bmatrix}
	\hspace{-.1em}
		\begin{bmatrix}
			x_i \\ y
		\end{bmatrix}&
	\hspace{-.2em}
		+
		\hspace{-.2em}
		\begin{bmatrix}
			h_i^x \\ h_i^y
		\end{bmatrix}^\top
	\hspace{-.3em}
		\begin{bmatrix}
			x_i \\ y
		\end{bmatrix}
		\\
		\text{subject to } \;
		\begin{bmatrix}
			A_i^x & A_i^y
		\end{bmatrix}
		\begin{bmatrix}
			x_i^\top & y^\top
		\end{bmatrix}^\top - b_i &= 0,\;  i \in \mathcal S, \\
		\begin{bmatrix}
			B_i^x & B_i^y
		\end{bmatrix}
		\begin{bmatrix}
			x_i^\top & y^\top
		\end{bmatrix}^\top -d_i&\leq 0,\;  i \in \mathcal S.  \label{eq:ineqCstr}
	\end{align}
\end{subequations}
The key idea is to apply standard algorithms for solving~\eqref{eq:valFunReform}
by optimizing only with respect to the coupling variables~$y$.
Doing so can lead to enhanced robustness, as the complexity of the subproblems is not exposed to the algorithm solving~\eqref{eq:valFunReform}.

Algorithms for solving \eqref{eq:valFunReform} typically require first-order and possibly second-order derivatives of all $\{\phi_i\}_{i \in \mathcal S}$.
Since all $\{\phi_i\}_{i \in \mathcal S}$ are non-smooth because of the inequality constraints, one typically relies on smooth reformulations.
Inspired by interior-point methods~\cite{DeMiguel2006}, we introduce log-barrier functions and
slack variables $s_i\in \mathbb R^{n_{ii}}$, which approximate \eqref{eq:ValueFun} by
\begin{subequations} \label{eq:ValueFunBarr}
	\begin{align}
		\Phi_i^\delta(y) \doteq	\min_{x_i,s_i}
		\frac{1}{2}
		\begin{bmatrix}
			x_i \\ y
		\end{bmatrix}^\top
		\begin{bmatrix}
			H_i^{xx} & H_i^{xy} \\
			H_i^{xy\top} & H_i^{yy}
		\end{bmatrix}
		\begin{bmatrix}
			x_i \\ y
		\end{bmatrix} \label{eq:barrObj}
		+ 
		\begin{bmatrix}
			h_i^x \\ h_i^y
		\end{bmatrix}^\top
		\begin{bmatrix}
			x_i \\ y
		\end{bmatrix}
		 - &\delta \mathds 1^\top \ln(s_i)\\
		\text{subject to } \;
		\begin{bmatrix}
			A_i^x & A_i^y
		\end{bmatrix}
		\begin{bmatrix}
			x_i^\top & y^\top
		\end{bmatrix}^\top - b_i = 0&,\;  i \in \mathcal S, \label{eq:BarrEConstr} \\
		\begin{bmatrix}
			B_i^x & B_i^y
		\end{bmatrix}
		\begin{bmatrix}
			x_i^\top & y^\top
		\end{bmatrix}^\top -d_i + s_i= 0&,\;  i \in \mathcal S, \label{eq:BarrIConstr}
	\end{align}
\end{subequations}
where $\delta \in \mathbb R_+$ is a barrier parameter, $\mathds 1 \doteq [1,\dots,1]^\top$, and the $\ln(\cdot)$ is evaluated component-wise. Note that  $\lim_{\delta \rightarrow 0}\Phi_i^\delta(y) = \phi_i(y)$, and that $\Phi_i^\delta$ is smooth\footnote{Under standard regularity assumptions \cite[A1-C1]{DeMiguel2008}.}. A basic primal decomposition strategy with smoothing is summarized in \autoref{alg:basPrimDec}.


\begin{algorithm}[t]
	\SetAlgoLined
	\caption{A basic primal decomposition scheme.} \label{alg:basPrimDec}
	\BlankLine
	\textbf{Initialize} $y^0, \delta^0$. \\
	\While{\textnormal{not terminated}}{
		1) Solve \eqref{eq:valFunReform} for $\phi_i \equiv \Phi_i^\delta $ with a NLP solver; in case the NLP solver calls $(\nabla_y \Phi_i^\delta, \nabla^{2}_y \Phi_i^\delta)$, compute them  locally for all  $i \in \mathcal S$. \\
		2) Decrease $\delta$.}
	\textbf{Return} $y^k$, $\{x_i^k\}_{i\in \mathcal S}$.
	\BlankLine
\end{algorithm}

\subsection*{Dealing With Infeasibility}
An issue in \autoref{alg:basPrimDec} is that  the subproblems \eqref{eq:ValueFun} may be infeasible for a given  $y$. One way of circumventing this is to introduce auxiliary variables $z_i \in \mathbb{R}^{n_y}$ and to use relaxation techniques either based on augmented Lagrangians or on exact $\ell$1-penalties. Consider a set of auxiliary  variables $\{z_i\}_{i\in \mathcal S}$ and introduce additional constraints $z_i = y$ for all $i \in \mathcal S$. Then, we reformulate \eqref{eq:ValueFunBarr} equivalently  \footnote{Observe that we have replaced $y$ by $z_i$ in the constraints here but not in the objective.
Exchanging $y$ in the objective is possible but might lead to a different numerical behavior.}
\begin{subequations} \label{eq:ValueFunBarr2}
	\begin{align}
		\Phi_i^\delta(y) =	\min_{x_i,s_i,z_i}
		\frac{1}{2}
		\begin{bmatrix}
			x_i \\ y
		\end{bmatrix}^\top
		\begin{bmatrix}
			H_i^{xx} & H_i^{xy} \\
			H_i^{xy\top} & H_i^{yy}
		\end{bmatrix}
		\begin{bmatrix}
			x_i \\ y
		\end{bmatrix}
		+
		\begin{bmatrix}
			h_i^x \\ h_i^y
		\end{bmatrix}^\top
		\begin{bmatrix}
			x_i \\ y
		\end{bmatrix}
		- &\delta \mathds 1^\top \ln(s_i)\\
		\text{subject to } \;
		\begin{bmatrix}
			A_i^x & A_i^y
		\end{bmatrix}
		\begin{bmatrix}
			x_i^\top & z_i^\top
		\end{bmatrix}^\top - b_i = 0&,\;  i \in \mathcal S, \\
		\begin{bmatrix}
			B_i^x & B_i^y
		\end{bmatrix}
		\begin{bmatrix}
			x_i^\top & z_i^\top
		\end{bmatrix}^\top -d_i + s_i= 0&,\;  i \in \mathcal S,\\
	z_i=y&,\; i \in \mathcal S \label{eq:consConstr},
	\end{align}
\end{subequations}
which can still be infeasible, but paves the way for
augmented Lagrangian and exact $\ell$1 relaxations.

\subsubsection*{Augmented Lagrangian Relaxation}

\begin{algorithm}[t]
	\SetAlgoLined
	\caption{AL-based primal decomposition.} \label{alg:ALPrimDec}
	\BlankLine
	\textbf{Initialize} $y^0, \delta^0$, $\rho^0$; $\lambda_i=0, i\in \mathcal{S}$. \\
	\While{\textnormal{phase 1}}{
		1) Solve the master problem \eqref{eq:valFunReform} for $\phi_i \equiv 	\Phi_i^{\delta,\rho} $ with a NLP solver; in case the NLP solver calls  $\phi_i(y^k),\nabla_y \phi_i(y^k)$, or  $\nabla_{yy}^2 \phi_i(y^k)$,  broadcast $y^k$ to all $i \in \mathcal S$ and compute them  locally. \\
		2) Decrease $\delta$, increase $\rho$.}
	\While{\textnormal{phase 2}}{
		3) Solve \eqref{eq:valFunReform} as in phase 1.\\
		4)  Broadcast $y^k$ to all $i \in \mathcal S$ and update $\lambda_i^k$ by \eqref{eq:lamUpdate}.}
	\BlankLine
	\textbf{Return} $y^k$, $\{x_i^k\}_{i\in \mathcal S}$. \label{alg:ALprimDec}
\end{algorithm}

A simple way of making~\eqref{eq:ValueFunBarr2} feasible for all $y$ is to relax \eqref{eq:consConstr} via a quadratic penalty \cite{DeMiguel2006}. However, in this case, large penalty parameters might lead to numerical difficulties and feasibility can in general not be guaranteed for a finite penalty parameter. Hence, we use an Augmented Lagrangian~(AL) approach to solve \eqref{eq:valFunReform} for late outer iterations with a constant barrier parameter $\delta$.
Assigning  the Lagrange multiplier $\lambda_i$ to \eqref{eq:consConstr},
we relax~\eqref{eq:consConstr} in an Augmented Lagrangian fashion by adding the terms $\lambda_i^{k\top} (y-z_i) + \frac{\rho}{2}\|y-z_i\|_2^2$ to the objective:
	\begin{align} \label{eq:subProbBarrQP2}
\hspace{-.4em}		\Phi_{i}^{\delta,\rho}(y,\lambda^k_i) \hspace{-.1em}\doteq \hspace{-.5em} \min_{x_i,s_i,z_i}\hspace{-.1em} \frac{1}{2}  \hspace{-.3em}
		\begin{bmatrix}
			x_i \\  y \\z_i
		\end{bmatrix}^{\hspace{-.4em}\top} \notag 
	\hspace{-.4em}
		\begin{bmatrix}
			H_i^{xx} & H_i^{xy} &0 \\
			H_i^{xy\top} & \hspace{-.6em}H_i^{yy} + \rho I\hspace{-.6em}& -\rho I \\
			0 & -\rho I & \rho I
		\end{bmatrix}
		\hspace{-.4em}
		\begin{bmatrix}
			x_i \\ y \\z_i
		\end{bmatrix} &
		+
		\begin{bmatrix}
			h_i^x \\ h_i^y + \lambda_i^k \\ -\lambda^k_i
		\end{bmatrix}^\top
		\begin{bmatrix}
			x_i \\  y \\z_i
		\end{bmatrix} - \delta \mathds 1^\top \ln (s_i) \\
			\text{s.t. }\;
		\begin{bmatrix}
			A_i^x & A_i^y
		\end{bmatrix}
		\begin{bmatrix}
			x_i^\top & z_i^\top
		\end{bmatrix}^\top
		- b_i= 0&,  \; |\;  \;\gamma_i  \\
		\begin{bmatrix}
			B_i^x & B_i^y
		\end{bmatrix}
		\begin{bmatrix}
			x_i^\top & z_i^\top
		\end{bmatrix}^\top + s_i -d_i = 0&,  \; |\;  \;\mu_i. \notag 
\end{align}
Here, $(\gamma_i,\mu_i)$ are Lagrange multipliers corresponding to the constraints in the same line.
Note that  the solution of  \eqref{eq:subProbBarrQP2}  can be forced to be arbitrarily close to that of  \eqref{eq:ValueFunBarr2} by letting $\rho \rightarrow \infty$, if \eqref{eq:subProbBarrQP2} is feasible for  $y$.
In addition, if one has a good Lagrange multiplier estimate $\lambda^k \approx \lambda^\star$, \eqref{eq:ValueFunBarr2} and \eqref{eq:subProbBarrQP2} are equivalent for a finite $\rho < \infty$ \cite[Sec. 3.2.1]{Bertsekas1999}, \cite[Thm. 17.5]{Nocedal2006}.
We will exploit this fact in the following.

\begin{algorithm}[t]
	\SetAlgoLined
	\caption{$\ell1$-based primal decomposition.} \label{alg:slPrimDec}
	\BlankLine
	\textbf{Initialize} $y^0, \delta^0, \bar \lambda$ large enough. \\
	\While{\textnormal{not terminated}}{
		1) Solve the master problem \eqref{eq:valFunReform} for $\phi_i \equiv \Phi_i^{\delta,\bar \lambda} $ with a NLP solver; in case the NLP solver calls $\phi_i(y^k),\nabla_y \phi_i(y^k)$, or  $\nabla_{yy}^2 \phi_i(y^k)$,  broadcast $y^k$ to all $i \in \mathcal S$ and compute them  locally. \\
		2) Decrease $\delta$.}
	\textbf{Return} $y^k$, $\{x_i^k\}_{i\in \mathcal S}$. \label{alg:SL1primDec}
	\BlankLine
\end{algorithm}

Primal decomposition based on the augmented Lagrangian works as follows: In phase~1, the barrier parameter $\delta$ and the penalty parameter $\rho$ are increased/decreased simultaneously.
In phase~2, when $\delta/\rho$ are sufficiently small/large, both are held constant and a standard augmented Lagrangian algorithm is applied to the resulting optimization problem to obtain feasibility in \eqref{eq:consConstr}.
We use the standard first-order update rule from augmented Lagrangian algorithms \cite[Chap. 17.3]{Nocedal2006}
\begin{align} \label{eq:lamUpdate}
	\lambda_i^{k+1} = \lambda_i^k + \rho(y^k-z_i^k).
\end{align}
The resulting scheme is summarized in \autoref{alg:ALPrimDec}.

\subsubsection*{$\ell1$-penalty relaxation}

A second variant to ensure feasibility is to relax \eqref{eq:consConstr} via an $\ell1$-penalty function.
This has the advantage that it is exact also for a finite penalty parameter $\bar \lambda \in \mathbb{R}_+$ without the need for Lagrange-multiplier estimation. By doing so, the objective becomes non-smooth.
However, the non-smoothness can be eliminated by using an elastic relaxation \cite[p.535]{Nocedal2006}:
the  $\ell1$-penalty $\min_{y, z_i} \|y - z_i \|_1$ is reformulated
by introducing two non-negative auxiliary variables $v_i, w_i \in \mathbb{R}^{n_y}_+$
as
$\min_{y, z_i, v_i, w_i} v_i + w_i$ subject to $y - z_i = v_i - w_i$.
The corresponding reformulation of \eqref{eq:ValueFunBarr2} reads
	\begin{align} \label{eq:subProbBarrQP3}
	\Phi_i^{\delta,\bar \lambda}\hspace{-.1em}(y) \hspace{-.0em} \doteq& \hspace{-.0em}	\min_{\substack{x_i,s_i,z_i \\ v_i,w_i}}\hspace{-.0em}
		\frac{1}{2}
		\hspace{-.0em}
		\begin{bmatrix}
			x_i \\ z_i
		\end{bmatrix}^{\hspace{-.3em}\top}
		\hspace{-.5em}
		\begin{bmatrix}
			H_i^{xx} &\hspace{-.5em} H_i^{xy} \\
			H_i^{xy\top} &\hspace{-.5em} H_i^{yy}
		\end{bmatrix}
	\hspace{-.3em}
		\begin{bmatrix}
			x_i \\ z_i
		\end{bmatrix}
		\hspace{-.1em}+ 		\hspace{-.1em}
		\begin{bmatrix}
			h_i^x \\ h_i^y
		\end{bmatrix}^{\hspace{-.3em}\top}
			\hspace{-.5em}
		\begin{bmatrix}
			x_i \\ z_i
		\end{bmatrix} \notag
		  \\
		& \qquad \qquad \qquad \qquad + \bar \lambda   \mathds 1^{\hspace{-.1em}\top} \hspace{-.1em}(v_i+w_i)   \hspace{-.0em}-\hspace{-.0em}\delta  (\mathds 1^{\hspace{-.1em}\top} \ln(s_i) \hspace{-.1em} + \hspace{-.1em}\mathds 1^{\hspace{-.1em}\top} \ln(v_i)\hspace{-.1em} +\hspace{-.1em}\mathds 1^{\hspace{-.1em}\top} \ln(w_i) ) \\
	&
	\begin{aligned}
			\text{s.t. }
		\begin{bmatrix}
			A_i^x & A_i^y
		\end{bmatrix}
		\begin{bmatrix}
			x_i^\top & z_i^\top
		\end{bmatrix}^\top
		- b_i&= 0,  &|& \;\gamma_i,\\
		\begin{bmatrix}
			B_i^x & B_i^y
		\end{bmatrix}´
		\begin{bmatrix}
			x_i^\top & z_i^\top
		\end{bmatrix}^\top + s_i -d_i &= 0,  	&|& \;\mu_i, \\
		y-z_i - v_i + w_i &= 0,   \hspace{-.5em}&|& \;\chi_i, \notag
	\end{aligned}
	\end{align}
where the bounds $(v_i,w_i) \geq 0$ are \Ss{replaced by} log-barrier functions.
If one chooses $\bar \lambda < \infty$ large enough,
\eqref{eq:ValueFunBarr2} and \eqref{eq:subProbBarrQP3} are equivalent \cite[Thm 17.3]{Nocedal2006}.

Primal decomposition based on the $\ell1$-penalty solves \eqref{eq:valFunReform} using $	\Phi_i^{\delta,\bar \lambda}$ with a fixed $\bar \lambda$ larger than a certain threshold and decreases the barrier parameter $\delta$ during the iterations.
The overall algorithm is summarized in \autoref{alg:slPrimDec}.

	The communication of Algorithm~1 and 2 is illustrated in \autoref{fig:iter}. Communication involves $\phi_i(y)$, its derivatives,  and $\lambda_i$. Hence,  the dimension of information exchange depends only on the dimension of coupling variables  $n_y$.

\begin{figure}[t]
	\centering
	\includegraphics[width=.55\linewidth]{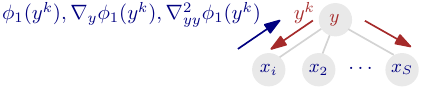}
	\caption{Communication in Algorithms~\ref{alg:ALPrimDec} and \ref{alg:SL1primDec}.}
	\label{fig:iter}
\end{figure}

\section{Computing sensitivities} \label{sec:compSens}
 Next, we review how to compute $\nabla_y \Phi_i^\delta$ and $\nabla_{yy}^2 \Phi_i^\delta$ under standard regularity assumptions based on the  implicit function theorem  \cite{DeMiguel2008}.
 Reformulate \eqref{eq:ValueFunBarr} by
	\begin{align} \label{eq:barrProbRewrite}
	 \Phi_i^\delta(y)&=\min_{x_i,s_i}f_i^\delta(x_i,s_i;y) \\
	  \;\text{subject to}\qquad  g_i(x_i&;y)=0\;\;|\;\; \gamma_i,\qquad  h_i(x_i;y) + s_i =0\;\;|\;\;\mu_i,\notag
\end{align}
where  $f_i^\delta$ is defined by \eqref{eq:barrObj}, and $g_i$ and $h_i$ are defined by \eqref{eq:BarrEConstr}, \eqref{eq:BarrIConstr}.
Define the Lagrangian to \eqref{eq:barrProbRewrite},
\begin{align*}
	L_i^\delta(x_i,s_i,\gamma_i,\mu_i;y) \doteq f_i^\delta (x_i,s_i;y) + \gamma_i^\top g_i(x_i;y) + \mu_i^\top(h_i(x_i;y) + s_i).
\end{align*}
Assume that \eqref{eq:barrProbRewrite} is feasible for a given $y$ and that the  regularity conditions from \cite[Ass. 1-4]{DeMiguel2008} hold. Then, the KKT conditions to \eqref{eq:barrProbRewrite} form an implicit function in form of $F_i^\delta(x_i^\star,s_i^\star,\gamma_i^\star,\mu_i^\star;y)=0$, where the superscript $(\cdot)^\star$ indicates a KKT stationary point.
Thus, by the implicit function theorem, there  exist a  neighborhood around $y$ for which there exists functions $p_i^\star(y)\doteq(x_i^\star(y),s_i^\star(y),\gamma_i^\star(y),\mu_i^\star(y))$ such that $F_i^\delta(p_i^\star(y);y)=0$. Hence, we can rewrite~\eqref{eq:barrProbRewrite} as $\Phi_i^\delta(y)= f_i(x_i^\star(y),s_i^\star(y);y)= L_i(p_i^\star(y);y)$ since $p_i^\star(y)$ is feasible.

Applying the total derivative and the chain rule yields
\begin{align*}
	\nabla_y 	\Phi_i^\delta(y) \hspace{-.1em}=\hspace{-.2em} \nabla_y L_i^\delta(p_i^\star(y);y) \hspace{-.2em}+\hspace{-.2em} \nabla_{p_i^{\star}}\hspace{-.1em} L_i^\delta(p_i^\star(y);y) \nabla_y p_i^{ \star}(y). \hspace{-.2em}
\end{align*}
By the KKT conditions, we have that $\nabla_{p_i^{\star}} L_i^\delta(p_i^\star(y);y)=0 $
and thus
\begin{align} \label{eq:nablaYPhi}
	\nabla_y 	\Phi_i^\delta(y) \hspace{-.1em}=\hspace{-.2em} \nabla_y L_i^\delta(p_i^\star(y);y).
\end{align}
Again by the total derivative, the Hessian can be computed by
\begin{align} \label{eq:Hess}
	\nabla_{yy}^2 	\Phi_i^\delta(y) = \nabla_{yy}^2 L_i(p_i^\star(y);y) + \nabla_{y p_i^\star}^2 L_i(p_i^\star(y);y) \,\nabla_y p_i^\star (y). 
\end{align}
It remains to derive an expression for $\nabla_y p_i^\star (y)$.
The KKT conditions of  \eqref{eq:barrProbRewrite} read
\begin{align*}
	F_i^\delta(x_i^\star,s_i^\star,\gamma_i^\star,\mu_i^\star;y) = 
	\begin{bmatrix}
		\nabla_{x_i} f_i(x_i^\star,y) + \nabla_{x_i} g_i(x_i^\star,y) \gamma_i^\star + \nabla_{x_i} h_i(x_i^\star,y) \mu_i^\star  \\
		-\delta S_i^{\star-1} \mathds 1 + \mu_i^\star  \\
		g_i(x_i^\star,y)  \\
		h_i(x_i^\star,y) + s_i^\star
	\end{bmatrix}
	\overset{!}{=}0, \notag
\end{align*}
where  $S_i^\star = \operatorname{diag}(s_i^\star)$. By the total differential and the chain rule we have $\nabla_y F_i^\delta(p_i^\star (y),y)  + \nabla_{p_i^\star} F_i^\delta(p_i^\star (y),y)  \nabla_y p_i^\star(y) =0 $. Hence,  we can compute the Jacobian $	\nabla_y p_i^\star(y)$ by solving the system of linear equations
\begin{align} \label{eq:dpStarDy}
	\left (\nabla_{p_i^\star} F_i^\delta(p_i^\star (y),y) \right ) \nabla_y p_i^\star(y)=- \nabla_y F_i^\delta(p_i^\star (y),y).
\end{align}
 Observe that \eqref{eq:dpStarDy} is a system of linear equations with multiple right-hand sides.
In summary, we can compute $\nabla_{yy}^2 	\Phi_i^\delta(y)$ locally for each $i \in \mathcal S$ by combining \eqref{eq:Hess} and \eqref{eq:dpStarDy}. The corresponding formulas for the gradient and the Hessian of  $\Phi_i^{\delta,\rho}$ and $\Phi_i^{\delta,\bar \lambda}$ from \eqref{eq:subProbBarrQP2} and \eqref{eq:subProbBarrQP3}, i.e. of the AL relaxation and the $\ell1$ relaxation \eqref{eq:barrProbRewrite}  are given in Appendix~\ref{sec:QPsens}.

 \section{A Method for Solving the Master Problem}

 An important question is how to solve the master problem~\eqref{eq:valFunReform} for different variants of $\phi_i$. In general, this can be done by any sensitivity-based NLP solver. We proceed by showing how to obtain a simple globalized version of \autoref{alg:basPrimDec} based  on a line-search  scheme; here, the idea is to show global convergence for the relaxed problem~\eqref{eq:valFunReform} with $\phi_i \in \{\Phi^{\delta,\rho}_i,\Phi_i^{\delta,\bar \lambda}\}$ for {fixed} penalty and barrier parameters. This leads to converge of a solution to the original problem~\eqref{eq:sepQP} by standard results from penalty and barrier methods \cite[Thms. 17.1, 17.6]{Nocedal2006}.

 Define the objective of \eqref{eq:valFunReform}, $\psi(y)\doteq \sum_{i \in \mathcal S} \phi_i(y)$,  as a {global} merit function, where $\phi_i \in \{\Phi^{\delta,\rho}_i,\Phi_i^{\delta,\bar \lambda}\}$. 
The basic idea is to employ a Sequential Quadratic Programming (SQP) scheme, where we ensure a sufficient decrease in $\psi$ at each step via the Armijo condition. The overall algorithm is summarized in \autoref{alg:maProbSolve}. Similar to the general primal decomposition scheme from \autoref{alg:basPrimDec}, the master problem solver evaluates the sensitivities $(\nabla_y \phi_i, \nabla_{yy}^2 \phi_i)$  in step (i), in order to construct a quadratic approximation of \eqref{eq:valFunReform} in step (ii). Solving this approximation yields a search direction $\Delta y$.
The stepsize $\alpha$ is updated with
a backtracking line-search with the Armijo condition as termination criterion.

 \subsection*{Global Convergence}

 We now establish global convergence\footnote{We use the  definition of global convergence in the context of NLPs, i.e. the convergence to a KKT point from an arbitrary initialization \cite[Chap. 3]{Nocedal2006}. However, since problems \eqref{eq:sepQP} and \eqref{eq:valFunReform} are strongly convex and we assume regularity in Assumption~\autoref{ass:stdAss}, every KKT point  is also a global  minimizer.} of \autoref{alg:maProbSolve} to a minimizer of the relaxed problems \eqref{eq:valFunReform} for  $\phi_i \in \{\Phi^{\delta,\rho}_i,\Phi_i^{\delta,\bar \lambda}\}$.
Assume that the following regularity assumptions hold at the optimal solution at  $p^\star\doteq[p^\star_i(y^\star)]_{i\in\mathcal S}$.
  \begin{ass}[Subproblem regularity] \hspace{-.28cm} \label{ass:stdAss}Assume that $\forall i \in \mathcal S$:
 	\begin{enumerate}
 		\item [a)]
 		$
 		\begin{bmatrix}
 			\Delta x_i^\star \\ \Delta  y^\star
 		\end{bmatrix}^	{\hspace{-.3em}\top}
 		\hspace{-.4em}
 		\begin{bmatrix}
 			H_i^{xx}\hspace{-.3em} & H_i^{xy} \\
 			H_i^{xy\top}\hspace{-.3em} & H_i^{yy}
 		\end{bmatrix}
 		\hspace{-.4em}
 		\begin{bmatrix}
 		\Delta 	x_i^\star \\ \Delta  y^\star
 		\end{bmatrix}>0,$ for all $ 		\begin{bmatrix}
 			\Delta 	x_i^\star \\\Delta  y^\star
 		\end{bmatrix} \neq 0$ with
 	 $
 	\begin{bmatrix}
 		A_i^x & \hspace{-.4em}A_i^y
 	\end{bmatrix}
 	\hspace{-.3em}
 	\begin{bmatrix}
 		\Delta 	x_i^\star \\\Delta  y^\star
 	\end{bmatrix}
 	\hspace{-.2em}=\hspace{-.1em}0
 	$;
 		\item [b)]
 		$
 		\begin{bmatrix}
 			A_i^x & A_i^y \\
 			B_i^x & B_i^y
 		\end{bmatrix}
 		$ has full row rank;
 		\item [c)] $[\mu_i^\star]_j \hspace{-.1em}+\hspace{-.1em}\left [[B_i^x\;\; B_i^y][x_i^{\star\top}\; y^{\star\top} ]^\top  \right  ]_j\hspace{-.3em} \neq 0$, $\forall j=1,\dots,\operatorname{nr}(B_i^x)$.
 	\end{enumerate}
 \end{ass}
  \begin{ass}[Master problem  regularity]\label{ass:Mreg} Assume that
		$
		\begin{bmatrix}
		A^{y\top} & B^{y\top}
		\end{bmatrix}^\top$ has full row rank.
\end{ass}
 Line-search methods require that the search direction $\Delta y$ is a descent direction, i.e. $\Delta y \left (\sum_{i \in \mathcal{S}} \nabla_i \phi_i (y) \right )< 0$.
  This can be ensured by showing that  $\sum_{i \in \mathcal{S}} \nabla_{yy}^2 \phi_i \succ 0$ for all variants of $\phi_i$, which we do with the next lemma.
 \begin{lem}[Positive definite Hessians] \label{lem:posDefHess}
 	Let Assumption~\autoref{ass:stdAss} hold and assume that $(s_i,\mu_i,v_i,w_i)>0$.
 	Then, a), the Hessian $	\nabla_{yy}^2 	\Phi_i^{\delta,\rho}$ is  positive definite for all $\rho >0$. Moreover, b), $	\nabla_{yy}^2 	\Phi_i^{\delta,\bar \lambda}$ is positive definite if  $\bar \lambda$ is larger than all multipliers associated to the elastic constraints $\bar \lambda > \max_j{\left |[\chi_i^\star]_j \right |}$.
 \end{lem}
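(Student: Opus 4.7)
The plan is to interpret $\nabla_{yy}^2 \Phi_i$ as a Schur complement of the subproblem's KKT matrix and reduce positive definiteness to a second-order sufficiency check on an extended Hessian. Substituting \eqref{eq:dpStarDy} into \eqref{eq:Hess}, with $q_i$ denoting the inner primal variables and $\eta_i$ the multipliers, yields
\[
\nabla_{yy}^2 \Phi_i \;=\; L_{yy} \;-\; \begin{bmatrix} L_{yq_i} & J_{y}^{\top} \end{bmatrix}\!\begin{bmatrix} L_{q_iq_i} & J_{q_i}^{\top} \\ J_{q_i} & 0 \end{bmatrix}^{\!-1}\!\begin{bmatrix} L_{q_iy} \\ J_{y} \end{bmatrix}.
\]
\autoref{ass:stdAss} guarantees the correct inertia of the KKT matrix, so by standard inertia arguments this Schur complement is positive definite if and only if the extended Hessian is strictly positive on every nonzero $(d_q, d_y)$ satisfying $J_{q_i} d_q + J_y d_y = 0$. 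I would then verify this directly for \eqref{eq:subProbBarrQP2} and \eqref{eq:subProbBarrQP3}.

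For part (a), the extended quadratic form of \eqref{eq:subProbBarrQP2} simplifies to
\[
Q(d) = \begin{bmatrix} d_{x_i} \\ d_y \end{bmatrix}^{\top}\!\begin{bmatrix} H_i^{xx} & H_i^{xy} \\ H_i^{xy\top} & H_i^{yy} \end{bmatrix}\!\begin{bmatrix} d_{x_i} \\ d_y \end{bmatrix} + \rho\|d_y - d_{z_i}\|^2 + \delta\, d_{s_i}^{\top} S_i^{-2} d_{s_i},
\]
subject to $A_i^x d_{x_i} + A_i^y d_{z_i} = 0$ and $B_i^x d_{x_i} + d_{s_i} + B_i^y d_{z_i} = 0$, with $d_y$ free (since $y$ does not appear in the constraints of \eqref{eq:subProbBarrQP2}). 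If $Q(d) = 0$, then $s_i > 0$ forces $d_{s_i} = 0$; the AL augmentation forces $d_{z_i} = d_y$; substituting into the first constraint yields $A_i^x d_{x_i} + A_i^y d_y = 0$, so part (a) of \autoref{ass:stdAss} forces $(d_{x_i}, d_y) = 0$, whence $d_{z_i} = 0$ too. This contradicts nontriviality, so $Q \succ 0$ and (a) follows.

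For part (b), the analogous quadratic form for \eqref{eq:subProbBarrQP3} reads
\[
Q(d) = \begin{bmatrix} d_{x_i} \\ d_{z_i} \end{bmatrix}^{\top}\!\begin{bmatrix} H_i^{xx} & H_i^{xy} \\ H_i^{xy\top} & H_i^{yy} \end{bmatrix}\!\begin{bmatrix} d_{x_i} \\ d_{z_i} \end{bmatrix} + \delta\bigl(d_{s_i}^{\top} S_i^{-2} d_{s_i} + d_{v_i}^{\top} V_i^{-2} d_{v_i} + d_{w_i}^{\top} W_i^{-2} d_{w_i}\bigr),
\]
with the additional elastic constraint $d_y - d_{z_i} = d_{v_i} - d_{w_i}$ and no $\rho$-augmentation. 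The hypothesis $\bar\lambda > \max_j |[\chi_i^\star]_j|$ is equivalent, through the KKT stationarity relations $[v_i]_j = \delta/(\bar\lambda - [\chi_i^\star]_j)$ and $[w_i]_j = \delta/(\bar\lambda + [\chi_i^\star]_j)$, to $v_i, w_i > 0$ at the solution, hence to $V_i^{-2}, W_i^{-2} \succ 0$. If $Q(d) = 0$, positivity of $s_i, v_i, w_i$ forces $d_{s_i} = d_{v_i} = d_{w_i} = 0$; the elastic constraint then gives $d_y = d_{z_i}$; and the argument of part (a) closes the proof.

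The main obstacle is the bookkeeping in part (b): establishing the link between $\bar\lambda > |[\chi_i^\star]_j|$ and the strict positivity of $v_i, w_i$ (hence of $V_i^{-2}, W_i^{-2}$) requires invoking the stationarity relations of the barrier KKT system and checking signs carefully. Part (a) is routine once the Schur-complement viewpoint is in place, since the $\rho I$ augmentation directly supplies the penalty term $\rho\|d_y - d_{z_i}\|^2$ that decouples the $y$-direction from the inner feasibility analysis.
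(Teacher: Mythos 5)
Your Schur-complement framing is sound and, as a framework, arguably cleaner than the paper's: the identity expressing $\nabla^2_{yy}\Phi_i$ as the Schur complement of the inner KKT matrix in the extended KKT matrix, combined with Haynsworth inertia additivity, correctly reduces positive definiteness to strict positivity of the extended quadratic form on the nullspace of the extended Jacobian. The paper instead proves a) by an explicit nullspace parametrization (with the identity $Z^\top(ZCZ^\top)^{-1}Z = Z^\top Z C^{-1} Z^\top Z$ and a ``worst case $Z_iZ_i^\top = I$'' reduction) and proves b) by showing the primal-dual block $F_i \succ 0$ and extracting $\nabla_y\chi_i^\star$ as a principal block of $(G_i^\top F_i^{-1}G_i)^{-1}$. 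Your part b) goes through: on feasible directions the equality constraint reads $A_i^x d_{x_i}+A_i^y d_{z_i}=0$, which is exactly the subspace appearing in Assumption~\ref{ass:stdAss}~a), so every term of your $Q$ is nonnegative and the each-term-vanishes logic is valid; your translation of $\bar\lambda > \max_j|[\chi_i^\star]_j|$ into $v_i,w_i>0$ via the barrier stationarity relations is the primal-form counterpart of the paper's requirement that the blocks $V_i^{-1}(\bar\lambda I - X_i)$ and $W_i^{-1}(\bar\lambda I + X_i)$ of $F_i$ be positive definite.

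Part a), however, has a genuine gap. From $Q(d)=0$ you conclude $d_{s_i}=0$ and $d_{z_i}=d_y$, which presupposes that all three terms of $Q$ are nonnegative on the feasible set---in particular that $[d_{x_i};d_y]^\top H_i[d_{x_i};d_y]\ge 0$ for every feasible direction. Assumption~\ref{ass:stdAss}~a) does not give this: it gives positivity only on the subspace $A_i^xd_{x_i}+A_i^yd_y=0$, whereas the constraint of \eqref{eq:subProbBarrQP2} ties $d_{x_i}$ to $d_{z_i}$, and $d_y$ is free. For directions with $d_{z_i}\neq d_y$ the $H_i$-term can be strictly negative, and for small $\rho$ the penalty $\rho\|d_y-d_{z_i}\|^2$ need not compensate, so the conclusion ``for all $\rho>0$'' does not follow from this argument. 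Concretely, take scalars with $H_i^{xx}=H_i^{yy}=1$, $H_i^{xy}=2$, a single equality constraint $x_i+0.1\,z_i=0$, and no inequalities: Assumption~\ref{ass:stdAss} holds (the reduced form is $0.01-0.4+1=0.61>0$), yet eliminating $x_i$ and minimizing over $z_i$ gives $\nabla^2_{yy}\Phi_i^{\delta,\rho}=1+\rho-(0.2+\rho)^2/(0.01+\rho)$, which equals $-1.195$ at $\rho=0.01$ and becomes positive only for $\rho\gtrsim 0.05$. Your argument is airtight only under the stronger hypothesis that the full block matrix $H_i$ is positive semidefinite (so each term of $Q$ is individually nonnegative), or for $\rho$ sufficiently large; note this is precisely the step the paper's own proof compresses into its ``worst case $Z_iZ_i^\top=I$'' reduction, so repairing part a) requires either adding such a hypothesis or restricting the admissible range of $\rho$, not merely tightening your bookkeeping.
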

The proof of Lemma~\autoref{lem:posDefHess} is given in Appendix~\ref{sec:ProofPosDef}.
Now we are able to show global convergence of  \autoref{alg:maProbSolve} to the solution of problem \eqref{eq:valFunReform} with  $\phi_i \in \{\Phi^{\delta,\rho}_i,\Phi_i^{\delta,\bar \lambda}\}$ for fixed penalty and barrier parameters.
 \begin{thm}[{Convergence of \autoref{alg:maProbSolve} for fixed $\delta, \rho,\bar \lambda$}] \label{thm:conv}
 	Consider \autoref{alg:maProbSolve} with either fixed $(\delta, \rho,\lambda)$ if $\phi_i = \Phi_i^{\delta,\rho}$ or with fixed $(\delta, \bar \lambda)$ if $\phi_i = \Phi_i^{\delta,\bar \lambda}$.
 	Let Assumptions~\ref{ass:stdAss}, \ref{ass:Mreg} and the conditions of Lemma~\autoref{lem:posDefHess} hold.
 	Then, the iterates generated by \autoref{alg:maProbSolve} converge to the global minimizer of problem~\eqref{eq:valFunReform} with $\phi_i \in \{\Phi^{\delta,\rho}_i,\Phi_i^{\delta,\bar \lambda}\}$.
 \end{thm}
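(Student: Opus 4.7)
\medskip
\noindent\textbf{Proof sketch (plan).} The plan is to run the standard line-search SQP convergence argument on the smooth, strongly convex relaxed master problem~\eqref{eq:valFunReform}, using Lemma~\ref{lem:posDefHess} to supply positive definiteness and the polyhedral structure of~\eqref{eq:GlobalIneq} to handle feasibility. First, I would observe that, under Assumption~\ref{ass:stdAss} and the hypotheses of Lemma~\ref{lem:posDefHess}, every local Hessian $\nabla_{yy}^2\phi_i$ is positive definite for $\phi_i\in\{\Phi_i^{\delta,\rho},\Phi_i^{\delta,\bar\lambda}\}$, so $H(y)\doteq \sum_{i\in\mathcal S}\nabla_{yy}^2\phi_i(y)\succ 0$ and the aggregate objective $\psi$ is strongly convex on the (convex polyhedral) feasible set defined by $A^y y=b^y$, $B^y y\le d^y$. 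Consequently the coordination QP~\eqref{eq:coordPob} is a strongly convex QP over a nonempty polyhedron and has a unique minimizer $\Delta y$ at every outer iterate $y$.

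Next, I would extract the descent property of this $\Delta y$. The KKT conditions of~\eqref{eq:coordPob} give
\begin{align*}
H(y)\,\Delta y + \nabla_y\psi(y) + A^{y\top}\nu + B^{y\top}\eta = 0,
\end{align*}
with $\eta\ge 0$ and the complementarity $\eta^\top(B^y(y+\Delta y)-d^y)=0$. Since the current iterate $y$ is feasible (which is maintained inductively from $y^0$ feasible and $\alpha\in(0,1]$ preserving feasibility along $\Delta y$), one gets $\nabla_y\psi(y)^\top\Delta y \le -\Delta y^\top H(y)\Delta y \le -\mu_H\|\Delta y\|^2$ for some $\mu_H>0$. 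Hence either $\Delta y=0$, in which case $(y,\nu,\eta)$ is a KKT triple of~\eqref{eq:valFunReform}, or $\Delta y$ is a strict descent direction for $\psi$. Because $\psi$ is continuously differentiable on the feasible set (the log-barriers keep $s_i,v_i,w_i$ strictly positive at optimal local solutions, and the implicit function theorem in Section~\ref{sec:compSens} delivers $C^1$-smoothness of $\nabla_y\phi_i$), the Armijo backtracking loop terminates in finitely many steps with $\alpha>0$ satisfying the sufficient-decrease condition.

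I would then close the argument with the standard Zoutendijk/monotone descent argument. The sequence $\{\psi(y^k)\}$ is monotonically decreasing and bounded below (strong convexity plus feasibility of~\eqref{eq:valFunReform} implies a finite minimum), so its level sets are compact and the sequence $\{y^k\}$ has accumulation points. The Armijo condition combined with the uniform lower bound on $H(y)$ on compact level sets yields $\nabla_y\psi(y^k)^\top\Delta y^k\to 0$ and hence $\Delta y^k\to 0$. Passing to the limit in the KKT conditions of~\eqref{eq:coordPob} produces a KKT point of~\eqref{eq:valFunReform}; by strong convexity this point is the unique global minimizer, and the whole sequence converges to it.

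\medskip
\noindent\textbf{Main obstacle.} The delicate step is justifying the smoothness and boundedness of $\nabla_y\phi_i$ and $\nabla_{yy}^2\phi_i$ along the iterates: one must ensure that the inner barrier subproblems stay feasible and that the slack variables $(s_i,v_i,w_i)$ remain strictly positive so that the sensitivity formulas of Section~\ref{sec:compSens} and the positive-definiteness bound from Lemma~\ref{lem:posDefHess} apply \emph{uniformly} on the compact sublevel set $\{y : \psi(y)\le\psi(y^0)\}$. Once this uniform regularity is in place, the rest is the textbook SQP-with-line-search convergence proof \cite[Chap.~3, 18]{Nocedal2006}.
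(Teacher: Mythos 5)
Your proposal is correct and follows essentially the same route as the paper: positive definiteness of the Hessians from Lemma~\ref{lem:posDefHess} yields a descent direction, Armijo backtracking gives sufficient decrease via the standard line-search convergence results, and the KKT conditions of the coordination QP at $\Delta y=0$ are identified with those of \eqref{eq:valFunReform}. If anything, your derivation of $\nabla_y\psi(y)^\top\Delta y\le-\Delta y^\top H(y)\Delta y$ from the KKT conditions of the \emph{constrained} subproblem \eqref{eq:coordPob} is more careful than the paper's argument, which reasons via the unconstrained minimizer and its possible "blocking" by the constraints.
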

  \begin{proof}
  	The unconstrained minimizer to \eqref{eq:coordPob},
    $\Delta y = - \sum_{i\in \mathcal S} \nabla_{yy}^2\phi_i ^{-1}(y)\sum_{i\in \mathcal S}  \nabla_{y} \phi_i(y)$, is a descent direction for the merit function $\psi(\cdot)$ since $\sum_{i\in \mathcal S}  \nabla_{y} \phi_i(y)^\top \Delta y < 0$ by the positive definiteness of the Hessians from Lemma~\autoref{lem:posDefHess}.
  	 Observe that $\Delta y =0$ is feasible for~\eqref{eq:coordPob} by feasibility of $y$.
  	 This shows that either $\Delta y=0 $, or $\Delta y$ is a descent direction.
Hence, by \cite[Lem 3.1]{Nocedal2006}, there  exists an $\alpha \in (0,1]$ such that $\psi(y) - \psi(y + \alpha \Delta y) \geq - \sigma \alpha\nabla_y \psi(y)^\top \Delta y $ is satisfied and thus the inner while loop is well defined. Moreover, by the convergence of line-search methods \cite[Prop 1.2.1]{Bertsekas1999}, \autoref{alg:maProbSolve} will either converge to a stationary point of $\psi$ returning $\Delta y =0$, or, alternatively, if the unconstrained search direction is blocked by the constraints in \eqref{eq:coordPob}, $\Delta y =0$ is returned since $\Delta y =0$ is feasible.

\begin{algorithm}[t]
	\SetAlgoLined
	\LinesNumbered
	\caption{A simple master problem solver. \label{alg:maProbSolve}}
	\BlankLine
	\textbf{Initialize} $y^0, \epsilon, \delta,  \zeta \in (0,1), \rho \text{ or }\bar \lambda, \phi_i \in \{\Phi^{\delta,\rho}_i,\Phi_i^{\delta,\bar \lambda}\}$. \\
	\While{ $\|\Delta y\| > 0$}{
		1) Evaluate $(\nabla_y \phi_i, \nabla_{yy}^2 \phi_i)$ locally for all  $i \in \mathcal S$ for given $(\delta,\rho)$ or $(\delta, \bar \lambda)$. \\
		2) Solve the coordination problem
		\begin{align} \label{eq:coordPob}
			\min_{\Delta y} &\sum_{i \in \mathcal S} \frac{1}{2} \Delta y\nabla_{yy}^2 \phi_i (y)\Delta y^\top \hspace{-.4em}+ \nabla_y \phi_i(y)^\top \hspace{-.2em}\Delta y\\
			\text{ s.t.} \,	&A^y (y \hspace{-.1em}+\hspace{-.2em} \Delta y) -b^y \hspace{-.2em}= \hspace{-.2em}0,	B^y (y\hspace{-.1em} + \hspace{-.2em}\Delta y) -b^y\hspace{-.2em} \leq\hspace{-.2em}  0. \notag
		\end{align}

		3) Line search: $\alpha = 1$;

		\While{\textnormal{$\psi(y) - \psi(y + \alpha \Delta y) \geq \hspace{-.2em} - \sigma \alpha\nabla_y \psi(y)^\top \Delta y $}}{
			$\alpha \leftarrow \zeta \alpha$}
		4) Update $y \leftarrow y + \alpha \Delta y$. \\
		5) Decrease $\delta$, increase $(\rho,\bar \delta).$\\
	}
	\textbf{Return} $y$, $\{x_i\}_{i\in \mathcal S}$.
	\BlankLine
\end{algorithm}

We now show that $y$ is optimal for \eqref{eq:valFunReform} if $\Delta y =0$.
The KKT conditions associated to \eqref{eq:coordPob} read
\begin{equation*}
\left\{
\begin{aligned}
	\sum_{i \in \mathcal S} \nabla_{yy}^2 \phi_i (y) \Delta y + \nabla_y \phi_i(y)^\top + A^{y\top}\gamma_y + B^{y\top}\mu_y &=0 \\
 A^y (y + \Delta y) -b^y = 0, \quad
 	B^y (y + \Delta y) -b^y &\leq  0,\\
(B^y (y + \Delta y) -b^y)^\top \mu =0,\quad  \mu&\geq 0,
\end{aligned}
\right.
\end{equation*}
which are precisely the KKT conditions for \eqref{eq:valFunReform} if $\Delta y =0$. Since \eqref{eq:valFunReform} is convex, and Assumptions~\ref{ass:stdAss} and \ref{ass:Mreg} hold, the assertion follows.
 \end{proof}

Combining Theorem~\autoref{thm:conv} with the convergence results for the $\ell1$-penalty method \cite[Them 17.3]{Nocedal2006} or the augmented Lagrangian method \cite[Prop 2.7]{Bertsekas1982} implies convergence of \autoref{alg:maProbSolve} to the minimizer of the original problem \eqref{eq:sepQP} for sufficiently large penalties.

\begin{remark}[Satisfying the assumptions of Theorem~\autoref{thm:conv}]
	Observe that the assumptions for Theorem~\autoref{thm:conv} are standard regularity assumptions from nonlinear programming \cite[Ass. 1-4]{DeMiguel2008}.
	Moreover, $(s_i,\mu_i,v_i,w_i)>0$ is always ensured when using interior-point solvers for solving \eqref{eq:ValueFunBarr2} and \eqref{eq:subProbBarrQP2} even in the case of early termination.
\end{remark}

\begin{remark}[Convergence rate] \label{rem:convRate}
Since phase~1 of \autoref{alg:ALPrimDec} is a combination of a quadratic penalty and an interior point method, one can achieve up to a superlinear local convergence rate  when decreasing/increasing $\delta/\rho$ fast enough since the only limiting factor is the Newton step \cite[Chap 17.1, Chap 19.8]{Nocedal2006}.
A too fast increase/decrease, however, leads to numerical difficulties in Newton step computations.
In phase~2, the algorithm switches to a augmented Lagrangian method, where one can expect  linear  convergence to the barrier problem~\eqref{eq:ValueFunBarr}, cf.  \cite[Prop 2.7]{Bertsekas1982}.
\end{remark}

\section{Implementation Aspects} \label{sec:Implementation}

The evaluation of the sensitivities of $\phi_i \in \{\Phi^{\delta,\rho}_i,\Phi_i^{\delta,\bar \lambda}\}$ requires solving local optimization problems \eqref{eq:subProbBarrQP2} or \eqref{eq:subProbBarrQP3} for fixed $\delta,\rho,\bar \lambda$.
This can be done using  specialized and optimized interior-point solvers, if they allow termination once a certain barrier $\delta$ is reached.
Moreover, interior-point solvers factorize the KKT matrices $\nabla_{p_i}F_i^\delta$ (cf. \eqref{eq:KKTsysAL}, \eqref{eq:KKTsl12}) at each inner iteration and these factorizations can be re-used for Hessian computation via \eqref{eq:dpStarDy}.
Here we provide two variants: our own interior-point QP solver based on standard techniques for stepsize selection and barrier parameter decrease \cite[Chap. 16.6]{Nocedal2006} and the option to use third-party solvers such as \texttt{Ipopt}~\cite{Wachter2006}.

In early iterations, it is typically not necessary to solve the local problems to a high precision, since the barrier parameter $\delta$ is still large and the penalty parameters $(\rho,\bar{\lambda})$ are still small.
Hence, we solve the subproblems to an accuracy measured in the violation of the optimality conditions $\|F_i^\delta(p_i^k,y^k)\|_\infty$ and terminate if  $\|F_i^\delta(p_i^k,y^k)\|_\infty <\min(\delta,1/{\rho})$ or $\|F_i^\delta(p_i^k,y^k)\|_\infty <\min(\delta,1/{\bar \lambda})$. This is inspired by the termination of inexact interior-point methods~\cite{Byrd1998}. Warm-starting the local solves with the solution of the previous iteration reduces computation time significantly.

\subsubsection*{Numerical Linear Algebra}
Efficient numerical linear algebra is crucial for performance.
The most expensive steps in terms of memory and CPU time are the factorization of the local KKT matrices and the backsolves in \eqref{eq:dpStarDy}.
Since we only consider QPs, large parts of the KKT matrices in \eqref{eq:dpStarDy} are constant over the iterations, which can be exploited for pre-computation. Here, we make heavy use of the Schur complement. Details on how to achieve this for the AL formulation are given in Appendix~\ref{sec:QPsens}.

\subsubsection*{Updating Penalties}
We use simple update rules for the penalty parameters. They are
\begin{align*}
	\delta^{k+1}= 0.2\delta^k, \quad \bar \lambda^{k+1} =  2 \bar\lambda^k, \quad \rho^{k+1} = 3\rho^k,
\end{align*}
with initial values  $\delta^0= 0.1, \rho^0 = 10^3$, and $\bar \lambda^0 = 100$.
After a fixed amount of $8$ iterations, phase 2 starts and the above values stay constant.
The parameters in the update rules are found via tuning  in typical ranges for interior point methods, augmented Lagrangian and $\ell$1-penalty methods \cite[Chapters 17.1, 17.2, Eq. 19.18]{Nocedal2006}.  They  terminate with  $(\delta^\infty,\bar \lambda^\infty,\rho^\infty) = (10^{-8},\;2.6\cdot 10^4, \;6.6\cdot 10^6)$. Moreover,  \texttt{IPOPT} terminates with penalty $\delta^\infty = 10^{-8}\dots10^{-9}$ for the examples considered.\footnote{Note that \texttt{IPOPT} uses $\mu$ as symbol for the penatly parameter.}

\section{Numerical Case Studies} \label{sec:numRes}
We consider an optimal control problem for  a city district with a scalable number of commercial buildings connected via a electricity grid with limited capacity. The building data is from \cite{Rawlings2018}. We neglect the waterside HVAC system and assume that  the buildings are equipped with  heat pumps with a constant coefficient of performance.

\begin{figure}
	\centering
	\includegraphics[width=0.5\linewidth]{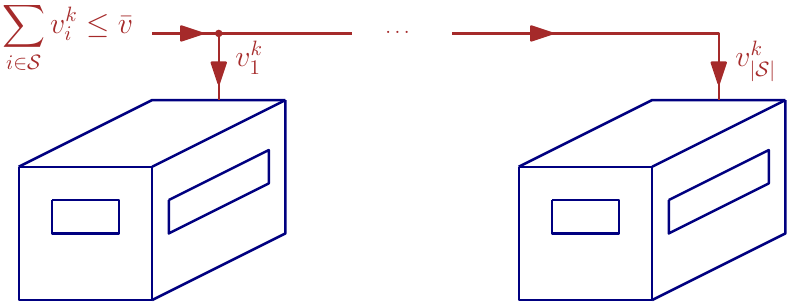}
	\caption{Buildings connected via network with limited capacity.}
	\label{fig:building}
\end{figure}

\subsection{District HVAC}
The evolution of the temperature of the $m$th zone in the $i$th building reads
\begin{align} \label{eq:roomDyn}
	C_{m,i} (T^{k+1}_{m,i} - T^k_{m,i}) =& - \hspace{-.2em}H_{m,i} (T_{m,i}^k \hspace{-.2em}- T_a^k) - \hspace{-.5em}\\
& \sum_{n \in \mathcal{Z}_i } \hspace{-.2em}G_{mn,i} (T_{m,i}^k \hspace{-.2em}- T_{n,i}^k) - \dot Q^{ck}_{m,i} \hspace{-.2em}+ \dot Q^{dk}_{m,i}, \notag
\end{align}
where at time step $k$, $T^k_{m,i}$ is the temperature of zone $m$ and $T_a^k$ the ambient temperature, $C_{i,m}$ is the thermal capacity, $H_{m,i}$ and $G_{mn,i}$ are heat transfer coefficients with the ambient and between two zones. Moreover,  $(\dot Q^{ck}_{m,i}, \dot Q^{dk}_{m,i}) $ are the controllable/uncontrollable heat influxes from the heat pump and from sources of disturbance such as solar irradiation and occupancy.
Eq.~\eqref{eq:roomDyn} can be written in compact form as
\begin{align*}
	C_i (T^{k+1}_i - T^k_i) = -H_i (T^k_i - T_a^k) - G_i T^k_i - \dot Q_i^{ck} + \dot Q^{dk}_i,
\end{align*}
where $T_i^\top  \doteq[T_{1,i}, \dots T_{|\mathcal{Z}_i|,i}], \dot Q_i^{ck^\top }  \doteq[\dot Q^{ck}_{1,i}, \dots \dot Q^{ck}_{n_{z},i}]$ and $\dot Q_i^{dk^\top }  \doteq[\dot Q^{dk}_{1,i}, \dots \dot Q^{dk}_{n_z,i}]$.
This yields a state-space model
\begin{align}
 T^{k+1}_i \doteq 	z^{k+1}_i&= \left (I -C^{-1}_i(H_i+G_i) \right )T^k_i - C_i^{-1}\dot Q^{ck}_i + C_i^{-1}[H_i\quad I] [T_a^{k\top} \quad \dot Q_i^{dk\top} ]^\top \label{eq:stateSp} \\
	&\doteq A_ix^k + B_iu^k_i +  E_id^k_i.\notag
\end{align}
Stacking the above over $N$  time steps yields
\begin{align*}
	\bar z_i &=  
	\begin{bmatrix}
		0 & 0 \\
		I_{N-1} \otimes A_i & 0
	\end{bmatrix}
	\bar z_i
	 +  
	\begin{bmatrix}
		0 \\
		I_{N-1} \otimes\hspace{-.2em} B_i
	\end{bmatrix}
\bar u_i
	 +  
	\begin{bmatrix}
		0 \\
		I_{N-1}\otimes E_i
	\end{bmatrix}
	\bar d_i
	 +  
\bar z_i^f
	\\
	\hspace{-.2em}&\doteq \bar A_i \bar z_i + \bar B_i \bar u_i + \bar E_i \bar d_i + \bar z_i^f,
\end{align*}
where $\bar z_i^\top \doteq [z_i^{0\top}, \dots, z_i^{N\top }]$, $\bar u_i^\top \doteq [u_i^{1\top}, \dots, u_i^{N\top }]$, $\bar d_i^\top \doteq [d_i^{1\top}, \dots, d_i^{N\top }]$, $\bar  z_i^{f\top} \doteq [z_i^f,0^\top , \dots, 0^\top]$ and  $z_i^f$ are the initial temperatures.
Define the total energy consumption of building $i \in \mathcal S$ at time step $k$ by $v_i^k = \mathds 1^\top u_i^k$, and $\bar v_i^\top \doteq [v_i^{0\top}, \dots, v_i^{N\top }]$.
Then, the above is equivalent to
\begin{align} \label{eq:dynCstr}
	\begin{bmatrix}
		(	I- \bar A_i) & -\bar B_i & 0
	\end{bmatrix}
	\begin{bmatrix}
		\bar z_i^\top &
		\bar u_i^\top  &
		\bar v_i^\top
	\end{bmatrix}^\top
	= \bar E_i \bar d_i + \bar z_i^f.
\end{align}
\Fp{The grid coupling between all subsystems $i \in \mathcal S$ induces an upper-bounded energy supply
writing as a global constraint:
$\mathds 1^\top v_i^k \leq \bar v$ for all times $k$.}
Moreover, we have local comfort constraints $\underline T \mathds 1 \leq z_i^k \leq  \mathds 1 \bar T$.

\subsection*{Optimal Control Problem}
Assume that the  goal of each building is to minimize its cost of energy consumption over $N$ time steps respecting all constraints.
The cost function to buy the power form the utility is given by  $f_i^k(u_i^{k}) \doteq 0.5\,c^k (u_i^{k})^2 + g^k u_i^k$, where  $g^k$ is a linear and  $c^k > 0$ is a (small) quadratic cost coefficient.
This yields a discrete-time Optimal Control  Problem~(OCP) for building~$i$,
\begin{align*}
	\phi_i(v_i) &\doteq \hspace{-.5em}\min_{\bar z_i, \bar u_i, \bar v_i}
	\frac{1}{2} \hspace{-.3em}
	\begin{bmatrix}
		\bar z_i \\ \bar u_i \\ \bar  v_i
	\end{bmatrix}
^{\hspace{-.5em}\top}
\hspace{-.5em}
	\begin{bmatrix}
		0 & 0 & 0 \\
		0 & c I  & 0\\
		0 & 0 &0
	\end{bmatrix}
\hspace{-.5em}
	\begin{bmatrix}
		\bar x_i \\ \bar u_i \\\bar v_i
	\end{bmatrix}
\hspace{-.3em}+\hspace{-.3em}
	\begin{bmatrix}
	0 \\  \mathds 1^\top \otimes g \\ 0
\end{bmatrix}^{\hspace{-.5em}\top} \hspace{-.5em}
\begin{bmatrix}
	\bar z_i \\ \bar u_i \\ \bar v_i
\end{bmatrix}
	\\
	\text{s.t. } &\; \eqref{eq:dynCstr},
	\begin{bmatrix}
		I & 0 & 0  \\
		-I & 0 & 0
	\end{bmatrix}
	\begin{bmatrix}
		\bar z_i \\ \bar u_i \\v_i
	\end{bmatrix}
	\leq
	\begin{bmatrix}
		\bar T \mathds 1 \\ - \underline T \mathds 1
	\end{bmatrix}, \;
	\bar v_i = I \otimes \mathds 1^\top  \bar u_i.
\end{align*}
The overall OCP---including global grid constraints---reads
\begin{align} \label{eq:OCP}
	\min_{v_1,\dots,v_{|\mathcal S|}} \sum_{i \in \mathcal S} \phi_i(v_i), \text{ s.t } (\mathds 1^\top \otimes I) \,[v_1^\top , \dots, v_{|\mathcal S|}^\top]^\top  \leq \bar v \, \mathds 1.
\end{align}
To obtain a problem in form of \eqref{eq:sepQP}, define  $x_i^\top = [\bar x_i^\top, \bar u_i^\top, \bar v_i^\top], i \in \mathcal S$, $y^\top = [v_1^\top ,\dots,v_{|\mathcal S|}^\top]$, $ H_i^{x}= \operatorname{blkdiag}(0, c I,0),h_i^{x} = [0, \quad  \mathds1 ^\top \otimes g,0 ]^\top, h_i^{y} = 0, H_i^{xy}= 0, B_i^y = 0, d_i^\top  = [\bar T \mathds 1^\top, \; -\bar T \mathds 1^\top  ]$,
\begin{align*}
	A_i^x = 
	\begin{bmatrix}
		(	I- \bar A_i)& -\bar B_i & 0  \\
		0 & I \otimes \mathds 1^\top & 0
	\end{bmatrix}, \quad 
	A_i^y = 
	\begin{bmatrix}
		0 \\
		-e_i^\top  \otimes I
	\end{bmatrix}, \quad 
	B_i^x  =
	\begin{bmatrix}
		I & 0&0   \\
		-I& 0 &0
	\end{bmatrix},
\end{align*}
where $e_i$ is the $i$th unit vector, $b_i^\top  =
\begin{bmatrix}
	(\bar E_i \bar d_i + \bar z^f_i)^\top  & 0
\end{bmatrix}$, $A^y =0 $,   $b^y=0$, $B^y = \mathds 1 ^\top \otimes I $, and  $d^y=\bar v \mathds 1$.

\subsection{Optimal Power Flow} \label{sec:OPFform}
Optimal Power Flow (OPF) aims at minimizing the cost of power generation in power systems while satisfying all grid and generator constraints.
A standard OPF formulation   reads
\begin{subequations}  \label{eq:OPF}
\begin{align}
	&\min_{g,\theta, f}\; \frac{1}{2} g^\top Hg + g^\top h \\
	\text{subject to }  \quad & C^gg-d = -B \theta, \quad  f = -B^b \theta, \label{eq:DC_PFEQ} \\
	& 0 \leq g \leq \bar g, \quad -\bar f \leq f \leq \bar f, \label{eq:genFlowConstr}
\end{align}
\end{subequations}
where $g \in \mathbb R^{n_g}$ is the  active power generation of generators, the diagonal matrix $H$ and vector $h$ are composed of generator-specific cost coefficients, $d \in \mathbb{R}^{n_b}$ are  active power demands and $\theta \in \mathbb R^{n_b}$ are voltage angles at each bus.
In \eqref{eq:DC_PFEQ}, $B \in \mathbb{R}^{n_b\times n_b}$ is the bus susceptance matrix and $B^b\in \mathbb R^{n_B\times  n_G}$ is the branch susceptance matrix, which map the voltage angles to power injections and power flows over transmission lines $f \in \mathbb{R}^{n_l}$ respectively, cf. \cite{Molzahn2019}.
The matrix $C^g \in \mathbb{R}^{n_b\times n_g}$ maps generator injections to connecting buses.
The constraint \eqref{eq:genFlowConstr} expresses generation and line flow limits.

Power grids are typically structured in hierarchy levels reaching from extra-high voltage to low-voltage grids.
As a numerical test case, we consider the \texttt{IEEE 300-bus}  test system to which we connect a varying amount of \texttt{118-bus} sub-grids (with data from the \texttt{MATPOWER} database \cite{Zimmerman2011}). We add a small regularization term of $10^{-6}$ on the main diagonal of each $H_{yy}$ to make the problem strongly convex in order to meet the conditions of Assumption~\ref{ass:stdAss}.

To obtain a problem in form of \eqref{eq:sepQP}, we introduce decision variables for the master grid $y = [g_0 \;\;\theta_0\;\; f_0\;\; \bar y_0]$, where $\bar y_0$ is an auxiliary variable corresponding to the active power at interconnecting nodes with the lower-level network.
For the $i$th lower-level subproblem we get
\begin{subequations} \label{eq:distOPF}
	\begin{align*}
		\phi_i(y)\hspace{-.1em} \doteq \hspace{-.1em}	\min_{g_i,\theta_i, f_i }
		\frac{1}{2}
		\hspace{-.1em}
		\begin{bmatrix}
		 g_i \\	\theta_i \\ f_i \\ y
		\end{bmatrix}^\top
		\hspace{-.9em}
		\begin{bmatrix}
			H_i & 0 \\
			0 & 0
		\end{bmatrix}
		\hspace{-.3em}
		\begin{bmatrix}
 g_i\\ \theta_i  \\ f_i \\ y
		\end{bmatrix}
		\hspace{-.2em}
		+
		\hspace{-.2em}
		\begin{bmatrix}
			h_i \\ 0
		\end{bmatrix}^\top
		\hspace{-.3em}&
		\begin{bmatrix}
			 g_i \\	\theta_i \\ f_i \\ y
		\end{bmatrix} \\
	\text{s.t.}
		\begin{bmatrix}
C_{i}^g & B_i & 0 & C_i^y \\
0 & B_i^b & I & 0
		\end{bmatrix}
		\begin{bmatrix}
				 g_i &	\theta_i & f_i & y
		\end{bmatrix}^\top  &=
	\begin{bmatrix}
	d_i \\ 0
	\end{bmatrix}, \label{eq:distPFEQ}\\
		\begin{bmatrix}
			I & 0 & 0 & 0 \\
			-I & 0 & 0 & 0 \\
			0 & 0 & I & 0 \\
			0 & 0 & -I & 0 \\
		\end{bmatrix}
		\begin{bmatrix}
			 g_i &	\theta_i & f_i & y
		\end{bmatrix}^\top &\leq
	\begin{bmatrix}
		\bar g_i \\ 0 \\ \bar f_i \\ \bar f_i
	\end{bmatrix}.
	\end{align*}
\end{subequations}
Here, $C_i^y$ are a selection matrices, which couple power demand/generation at interconnecting nodes between subsystems.
The master problem then reads
	\begin{align*}
\min_{y }
	 \sum_{i \in \mathcal{S}} \phi_i(y) \quad \text{subject to } \quad [\theta_0]_1 = 0,
		\\
		\begin{bmatrix}
			C_{0}^g & B_0 & 0 & C_0^y \\
			0 & B_0^b & I & 0
		\end{bmatrix}
		\begin{bmatrix}
			g_0 &	\theta_0 & f_0 & \bar y_0
		\end{bmatrix}^\top  &=
		\begin{bmatrix}
			d_0 \\ 0
		\end{bmatrix}, \\
		\begin{bmatrix}
			I & 0 & 0 & 0 \\
			-I & 0 & 0 & 0 \\
			0 & 0 & I & 0 \\
			0 & 0 & -I & 0 \\
		\end{bmatrix}
		\begin{bmatrix}
			g_0 &	\theta_0 & f_0 & \bar y_0
		\end{bmatrix}^\top &\leq
		\begin{bmatrix}
			\bar g_0 \\ 0 \\ \bar f_0 \\ \bar f_0
		\end{bmatrix},
	\end{align*}
where $[\theta_0]_1 = 0$ is a reference (slack) constraint in order to obtain an unique angle solutions $\{\theta_i^\star\}_{i \in \mathcal{S} \cup \{0\}}$, and  $C_0^y$ is a matrix mapping the coupling variables to  coupling buses.

\begin{table}[t]
	\centering
	\caption{Number of decision variables and constraints  for the HVAC and OPF problems.}
	\begin{tabular}{rrrrrrrr}
		\hline
		&$|\mathcal S|$	&  $n_x$ & $n_y$& $n_{e}$& $n_{i}$ & $n_{ey}$& $n_{iy}$\\
		\hline
		\parbox[t]{2mm}{\multirow{3}{*}{\rotatebox[origin=c]{90}{HVAC}}} 	&30	& 28,200 & 690 & 15,090 & 28,800 & 0 & 1,403  \\
		&180	& 169,200 &  4,140 & 90,540 & 172,800 & 0 & 8,303    \\
		&300	& 282,000 & 6,900 & 150,900 & 288,000 &  0 & 13,823  \\
		\hline
		\parbox[t]{2mm}{\multirow{2}{*}{\rotatebox[origin=c]{90}{OPF}}} 	&29 & 11,191 & 809 & 9,557 & 14,880 & 712 & 960 \\
		&	64& 23,756 & 844 &20,232&31,680 & 712 & 960 \\
		\hline
	\end{tabular} \label{tab:nDecConstr}
\end{table}

\subsection{Numerical Results} \label{sec:numRes2}
We benchmark our algorithms against ADMM (as one of the most popular algorithms for decomposition) and against \texttt{\texttt{Ipopt}} v3.14.4 (as one of the most prominent centralized NLP solvers).\footnote{The ADMM-based QP solver \texttt{OSQP} solver did not converge for the problems presented here.}
The particular variant of ADMM can be found in an extended version of this work \cite{Engelmann2022b}.
We rely on \texttt{OSQP} v0.6.2~\cite{Stellato2020} for solving subproblems and the coordination problem in ADMM.
In primal decomposition, we rely on our own interior-point solver for the subproblems and on \autoref{alg:maProbSolve} for coordination, where we solve \eqref{eq:coordPob} via \texttt{Ipopt}.\footnote{Note that the proposed framework is flexible with respect to  the interior point solvers used in the subproblems as long as one can access the corresponding sensitivity matrices. \label{fn:localIP}}
We perform all simulations on a shared-memory virtual machine with 30 cores and 100GiB memory.
The underlying hardware is exclusively used for the case studies.
All algorithms are parallelized via Julia multi-threading---thus all subproblems are solved on multiple cores in parallel.

We compare the numerical performance of all algorithms on OCP \eqref{eq:OCP} for $|\mathcal S|\in \{30,180, 300\}$ buildings, and on the OPF problem \eqref{eq:OPF} with  $|\mathcal S|\in \{29,64\}$ sub-grids.  \autoref{tab:nDecConstr} shows the corresponding number of local/global decision variables $n_x$/$n_y$,  the number of local equality/inequality constraints $n_{e}$/$n_{i}$, and the number of global equality/inequality constraints $n_{ey}$/$n_{iy}$.
We employ ADMM from \cite{Engelmann2022b} with penalty parameters $\rho\in \{10^0,10^1,10^2,10^3\}$.

\begin{figure}[t]
   \begin{subfigure}[b]{.45\textwidth}
	\centering
\includegraphics[width=1\linewidth]{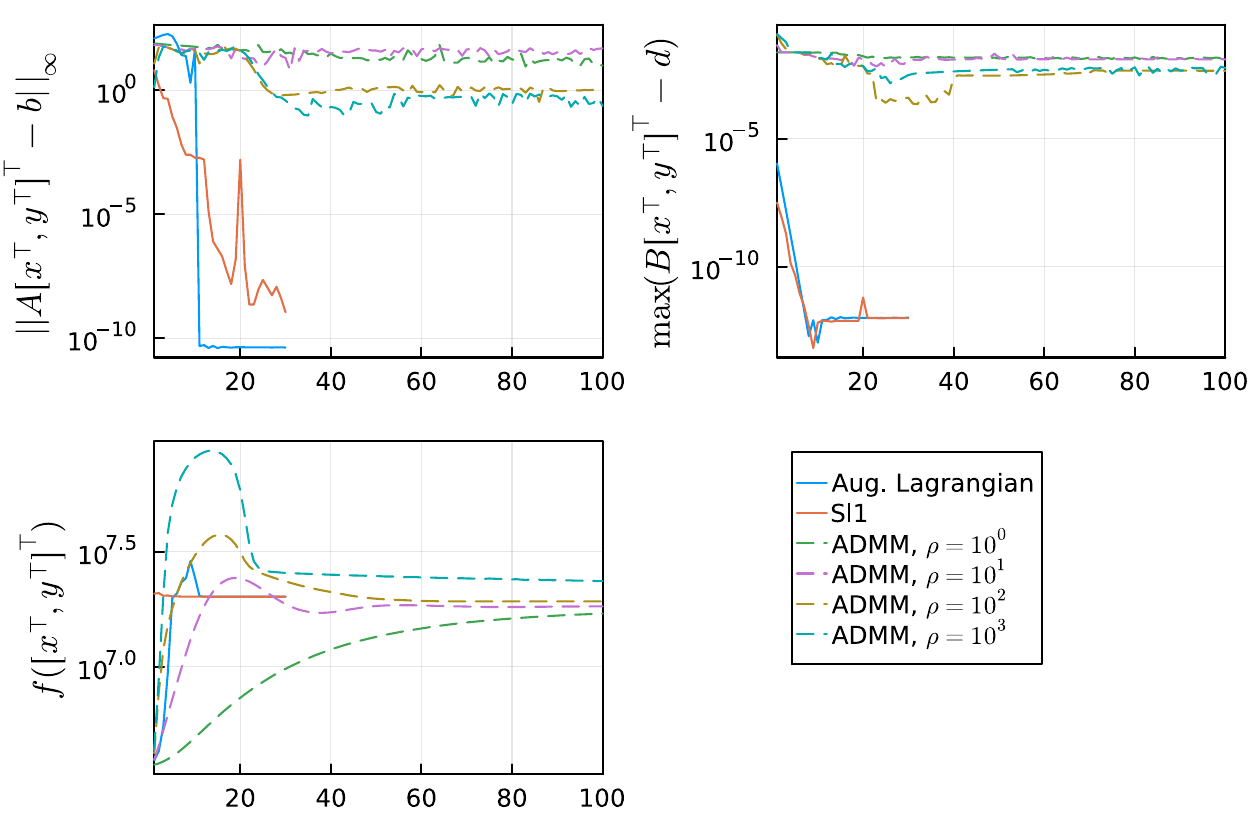}
\vspace{-.6cm}
\caption{$|\mathcal S|=30$ buildings.}
\label{fig:30subsyshvac}
\end{subfigure}
   \begin{subfigure}[b]{.45\textwidth}
	\centering
		\vspace{.1cm}
\includegraphics[width=1\linewidth]{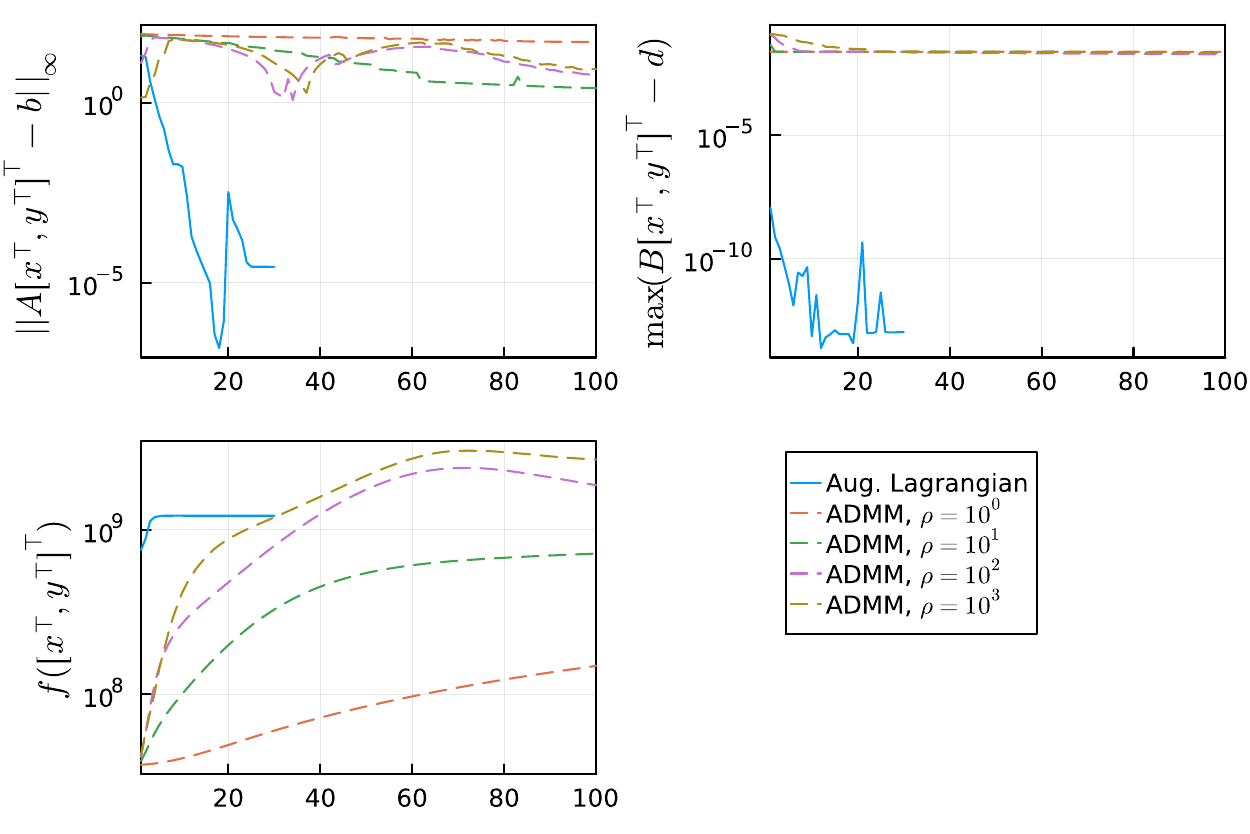}
\vspace{-.6cm}
\caption{$|\mathcal S|=300$ buildings.}
\label{fig:300subsyshvac}
\end{subfigure}
\caption{Convergence for three HVAC problems.}
\end{figure}

\begin{figure}[t]
	\begin{subfigure}[b]{.45\textwidth}
		\centering
		\includegraphics[width=1\linewidth]{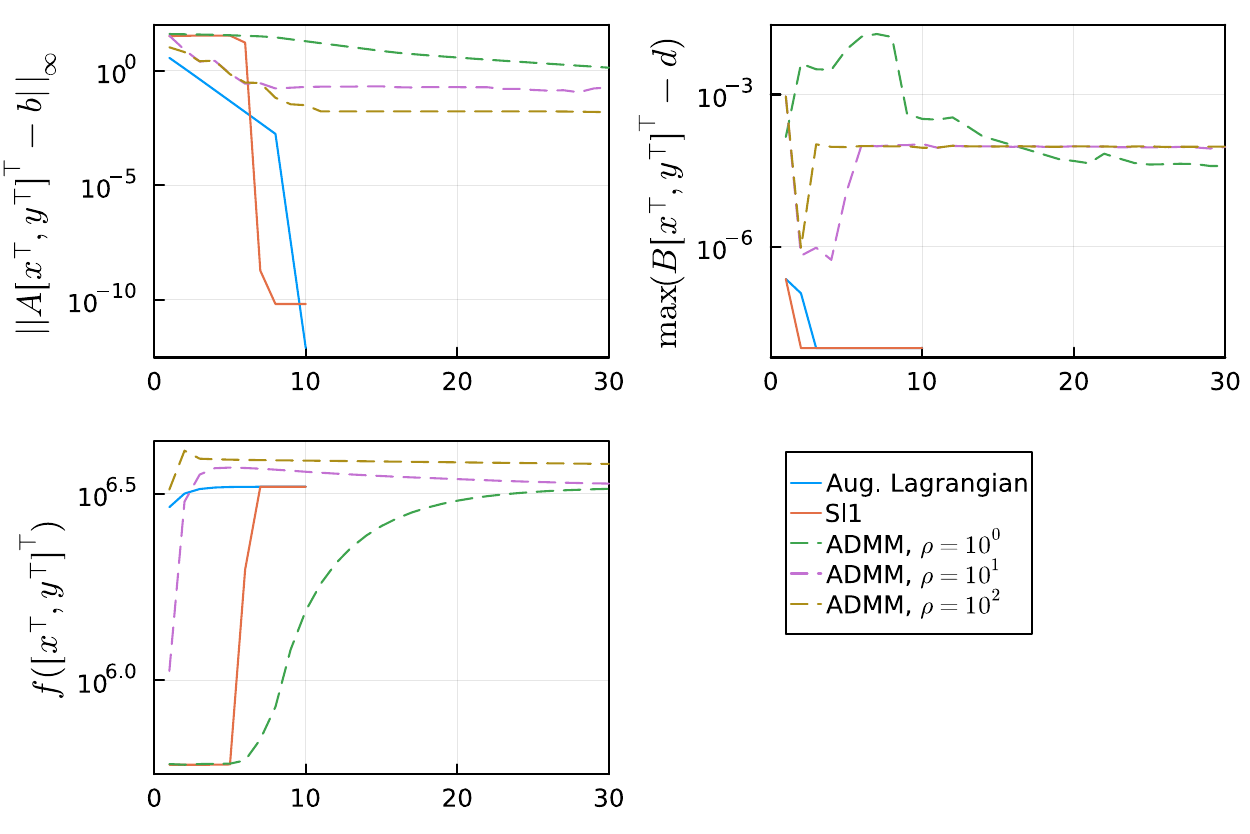}
		\vspace{-.6cm}
		\caption{$|\mathcal S|=29$ sub-grids.}
		\label{fig:29subsysOPF}
	\end{subfigure}
	\begin{subfigure}[b]{.45\textwidth}
		\centering
		\vspace{.1cm}
		\includegraphics[width=1\linewidth]{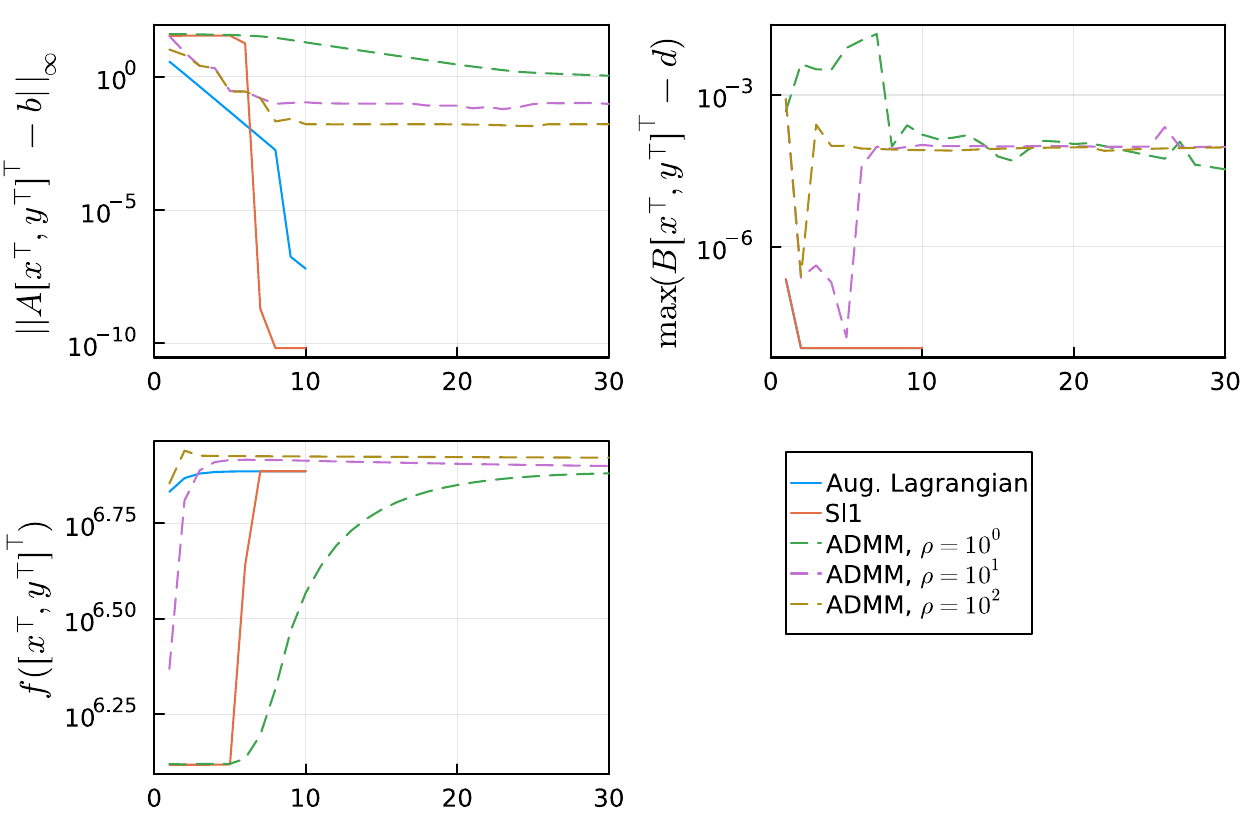}
		\vspace{-.6cm}
		\caption{$|\mathcal S|=64$ sub-grids.}
		\label{fig:64subsysOPF}
	\end{subfigure}
	\caption{Convergence for two OPF problems.} \label{fig:convOPF}
\end{figure}

\autoref{fig:30subsyshvac} illustrates the numerical performance of both primal decomposition variants and ADMM for the HVAC problems. \autoref{fig:convOPF} shows their performance for the OPF problems.
\autoref{fig:300subsyshvac} shows the AL formulation only, since the $\ell$1  formulation runs out of memory for this problem.
The constraint violations for the equality constraints \eqref{eq:localEq}, $
\left \|[
	A^x \;\; A^y
]
[
	x^\top \;\; y^\top
]^\top - b \right \|_\infty$,
for the inequality constraints \eqref{eq:localIneq},
$ \max([
B^x \;\; B^y
]
[
x^\top \;\; y^\top
]^\top - b)$, and the value of the cost function  $f([x^\top,y^\top]^\top)$ from \eqref{eq:sepCost} are displayed, where the x-axis shows the iteration count.
One can observe that the primal decomposition schemes achieve a high degree of feasibility in less than 10 iterations for all cases.
Moreover, the optimality gap $(f([x^\top,y^\top]^\top) - f([x^{\star \top},y^{\star \top}]^\top))/f([x^{\star \top},y^{\star \top}]^\top)$ is below $0.01 \%$ in less than 10 iterations for both primal decomposition variants and for all $|\mathcal S|$, where $[x^{\star \top},y^{\star \top}]^\top$ is computed via \texttt{Ipopt}.
For ADMM,  infeasibility  and the optimality gap stay large independently of the choice of~$\rho$.

\begin{remark}[Scaling of the $\ell$1 formulation]
	The reason for the poor scaling of the $\ell$1-formulation is two-fold: First, the relaxation \eqref{eq:subProbBarrQP3} introduces $2n_y$ additional slack variables and inequality constraints.
	Hence, the KKT system in the subproblems defined via \eqref{eq:KKTsl12}  has a larger size
  than the KKT system we get with the AL formulation \eqref{eq:KKTsysAL}.
	Moreover, the additional inequality constraints potentially lead to smaller stepsizes due to the fraction-to-boundary rule \cite[Eq 19.9]{Nocedal2006}. Hence, more iterations in the subproblems are required compared to the AL formulation.
\end{remark}

\section{Discussion of Algorithmic Properties} \label{sec:algProps}
Next, we discuss algorithm properties in view of the desirable properties from \autoref{sec:intro}.

\subsubsection*{Computation Times}
Comparing computation times between algorithms is difficult, since the numerical performance strongly depends on the implementation.
Nonetheless, we would like to provide some timing information to underline the potential of primal decomposition methods.
\autoref{tab:nDecConstrOPF} shows the computation times and the number of iterations for all algorithms.
\texttt{Ipopt} is terminated at optimality with default settings and the computation time for the primal decomposition schemes are evaluated once an optimality gap of $10^{-4}$ and a maximum constraint violation for equality/inequality constraints of $10^{-5}$ is reached.
ADMM is terminated after a maximum of 5,000 iterations.
One can observe that the AL formulation and \texttt{Ipopt}
have similar computation times independently of $|\mathcal{S}|$.
Although fast computation is not our primary focus, this indicates the potential of primal decomposition for large-scale optimization.

\begin{table}[t]
	\centering
	\caption{Timing and number of iterations for the HVAC and the OPF problem with $|\mathcal S|\in \{30,180, 300\}$ buildings and  $|\mathcal S|\in \{29,64\}$ sub-grids, 30 cores.}
	\begin{tabular}{rrcccccc}
		\hline
		&	&  &   &&\texttt{Ipopt}& ADMM  & ADMM  \\
		&	$|\mathcal S|$ & 	& AL& $l$1& par. LA&$\rho = 10$& $\rho = 100$ \\
		\hline
		\parbox[t]{2mm}{\multirow{3}{*}{\rotatebox[origin=c]{90}{HVAC}}}
		&300 &	t[s]& 431.5&  - & \textbf{386.7}  & 892.8$^*$ & 1,122.2$^*$ \\
		&180 &	& \textbf{195.7} & - &218.1 & 510.3$^*$& 1,322.2$^*$ \\
		&30& & \textbf{18.1} & 270.0&25.5 & 72.8$^*$ &  85.53$^*$\\
		\hline
		\parbox[t]{2mm}{\multirow{2}{*}{\rotatebox[origin=c]{90}{OPF}}}
		&64 &	t[s]& \textbf{2.64}&  283.89 & 4.64 & 70.52$^*$ & 111.29$^*$ \\
		&29 &	& \textbf{1.83} & 95.76 &2.01 & 13.54 & 61.58$^*$ \\
		\hline
		\parbox[t]{2mm}{\multirow{3}{*}{\rotatebox[origin=c]{90}{HVAC}}}
		&300&	iter. &   \textbf{13} & -& 145 &5,000$^*$&5,000$^*$\\
		&180&	 &  \textbf{12} & -& 141 & 5,000$^*$&5,000$^*$ \\
		&30 &		& 13&\textbf{12}& 104 &5,000$^*$&5,000$^*$\\
		\hline
		\parbox[t]{2mm}{\multirow{2}{*}{\rotatebox[origin=c]{90}{OPF}}}
		&64&	iter. &   \textbf{9} & \textbf{9}& 16 & 3,199$^*$ & 5,000$^*$ \\
		&29&	 &  9 & \textbf{7}& 15 & 1,133 &5,000$^*$ \\
		\hline
		&&  term. & \multicolumn{2}{c}{rel. opt. $10^{-4}$ } & optimal &  \multicolumn{2}{c}{rel. opt. $10^{-4}$ }\\
		&& & \multicolumn{2}{c}{infeas. $10^{-5}$ }& &\multicolumn{2}{c}{infeas. $10^{-5}$ } \\
		\hline
	\end{tabular} \label{tab:nDecConstrOPF}\\[.01em]
	\smaller $^*$terminated because  max. iterations reached.
\end{table}

 Both, ADMM and the open-source solver \texttt{OSQP} are not able to solve the HVAC problems to a sufficient accuracy with a reasonable number of iterations (4,000 for \texttt{OSQP} and 5,000 for ADMM).
This indicates that these problems are rather challenging, which might be due to the large number of inequality constraints coming from the temperature bounds.
The computation time of ADMM, however, remains reasonable even for 5,000 iterations since we use warm-started \texttt{OSQP} as a very fast local solver parallelized via multi-threading.
In case less-strict termination tolerances are required, ADMM might become as fast as primal decomposition, overall.
Both primal decomposition schemes require far less iterations.
\texttt{Ipopt} requires more than 100 iterations, which is unusual for interior-point methods.
A possible explanation  is that interior-point methods typically choose the smallest stepsize such that no inequality constraint is violated (fraction-to-boundary rule) \cite[Eq 19.9]{Nocedal2006}. This can lead to  a slow progress since in this case only small steps are taken.
Primal decomposition mitigates this, since each subproblem has its ``own'' stepsize when solving the subproblems.

For the OPF problems~\eqref{eq:OPF}, computation times are generally lower because of smaller problem dimensions, cf. \autoref{tab:nDecConstr}.
	The qualitative performance of all algorithms relative to each other is similar to what we have observed for the HVAC problems.

Internal timings for the AL formulation and different sizes $|\mathcal S|$ for the HVAC problem are shown in \autoref{tab:interTime}.
Here, the time spent in the coordination problem \eqref{eq:coordPob} and in the local solvers stays relatively constant for varying $|\mathcal S|$.
The time spent for sensitivity computation, however, increases significantly with $|\mathcal S|$.
One explanation for that is that the complexity in the backsolves for computing the Hessian via \eqref{eq:dpStarDy} increases with $O(n_y^3)$.
In case  the backsolves can be parallelized, for example if the subproblems themselves use multi-threading in a cluster environment, computation time can be reduced.

\begin{table}
	\centering
	\caption{Internal timing (\%) for the AL formulation and the HVAC problem, 30 cores.}
	\begin{tabular}{rccccc}
		\hline
		$|\mathcal S|$ 		&sensitivity eval. &   local sol. & coord. &line search & other   \\
		\hline
		300 & 68.10 & 6.66 & 6.10 & 17.84 & 1.30 \\
		180 & 41.58 & 19.86 & 9.02 & 26.89 & 2.65\\
		30& 4.37 & 6.69 & 9.35 & 79.29 &  0.30 \\
		\hline
	\end{tabular} \label{tab:interTime}
\end{table}

\subsubsection*{Feasibility and Optimality}
Reaching feasibility fast is often crucial in the context of infrastructure systems to ensure system stability.
Here, primal decomposition can shine.
For all case studies, one can observe a high degree of primal feasibility in 10-20 iterations.
ADMM requires far more iterations to reach a sufficient degree of feasibility, which is a known limitation of ADMM \cite{Boyd2011}.

\begin{table}[t]
	\centering
	\caption{Properties of  ADMM and Primal Decomposition.}
	\begin{tabular}{crlllllll}
		\hline			& & ADMM  & 		Primal Decomp.\\
		\hline
		Communication & forward& $\mathcal O( n_y)$ & $\mathcal O( (N_{ls} +1) n_y)$\\
		(\#floats step)& backward& $\mathcal O( n_y)$ &  $\mathcal O((n_y^i)^2 + n_y) $ \\
		\hline
		Computation & local& Convex QP & NLP \\
		& global & Convex QP & NLP +  lin. equations \\
		&&& with multiple rhs. \\
		\hline
		Conv. rate (max.) && Linear & (Superlinear)$^\#$ \\
		\hline
		Decentralization & &Decentralized$^*$ & Distributed \\
		\hline
	\end{tabular} \\[.05cm]
	\smaller  $^*$in the sense that decentralization of ADMM is straight-forward. \newline
	\smaller $^\#$to a solution of the barrier problem \eqref{eq:ValueFunBarr2}
	\label{tab:compADMMPrimDec}
\end{table}

\subsubsection*{Communication}
Primal decomposition (\autoref{alg:maProbSolve}) requires communication of the current iterate $y^k$ to all subsystems $\mathcal O(N_{ls} +1)$ times, where $N_{ls}$ is the number of line search steps.
For backward communication, both primal decomposition schemes  require the gradient $g_i \doteq\nabla_y \phi_i $  and the Hessian  $H_i \doteq \nabla_{yy}^2 \phi_i$ of the optimal value functions once in each outer iteration from the subsystems to the master.
The communication of the Hessian is the most expensive step, but  the $H_i$ are typically highly sparse and symmetric, which reduces its communication demand.
\autoref{fig:sparsityplots} exemplarily shows the sparsity pattern of the Hessian of the second subsystem $H_2=\nabla_{yy}^2 \phi_2 (y)$ for one HVAC and one OPF problem.
For OPF, $H_2$  is diagonal and thus very cheap to communicate.
For HVAC, $H_2$ is almost diagonal except for one dense block.
This comes from the  fact that the local Hessians $\{H_i\}_{i\in \mathcal S}$ have non-zero entries only for the coupling elements of $y$ to the $i$th subsystem  plus the non-zero entries in $H_i^{yy}$ cf. \eqref{eq:HessAugLag}.
We denote by $n_y^i$ the number of coupling variables between the $i$th subsystem and the master in the following.
For the OPF problems, the subsystems are coupled to the master only via one bus of active power exchange, which leads to a $(1,1)$-Hessian block and thus $H_2$ is diagonal.
Similarly for the HVAC problems, the subproblems are coupled to the master via power exchange occurring over $24$ time steps, which leads to a dense $(24,24)$-Hessian block.
Thus, each subsystem $i \in \mathcal S$ needs to communicate $\mathcal O(n_y^i(n_y^i +1)/2 + n_y +N_{ls}^k + 1)$ floats to the master taking  the Hessian's symmetry and communication of $g_i$ and $\phi_i(y^k)$ for each line search step into account.

ADMM requires to communicate $y^k$ from the master to all subproblems and the corresponding local equivalent $z_i^k$ from all subsystems to the master in each iteration \cite{Engelmann2022b}.
Hence, forward and backward communication is $\mathcal O(n_y)$.

\autoref{fig:Comm} shows the forward and backward communication for AL-based primal decomposition and ADMM for the HVAC problem with $|\mathcal S | = 30$, and for the OPF problem with $|\mathcal S | = 39$ for one subsystem.
One can see that for the HVAC problem, AL's backward communication is approximately twice the ADMM communication as we have to communicate the dense Hessian block.
However, since $n_i^y \ll n_y$, the difference remains small.
In later iterations, forward communication increases due to a higher number of line search steps.

For the OPF problem, the forward communication costs of primal decomposition and ADMM are similar since no line search is required.
Backward communication is  twice as high in primal decomposition because of the diagonal Hessian.

\begin{figure}
	\centering
	\includegraphics[width=0.4\linewidth]{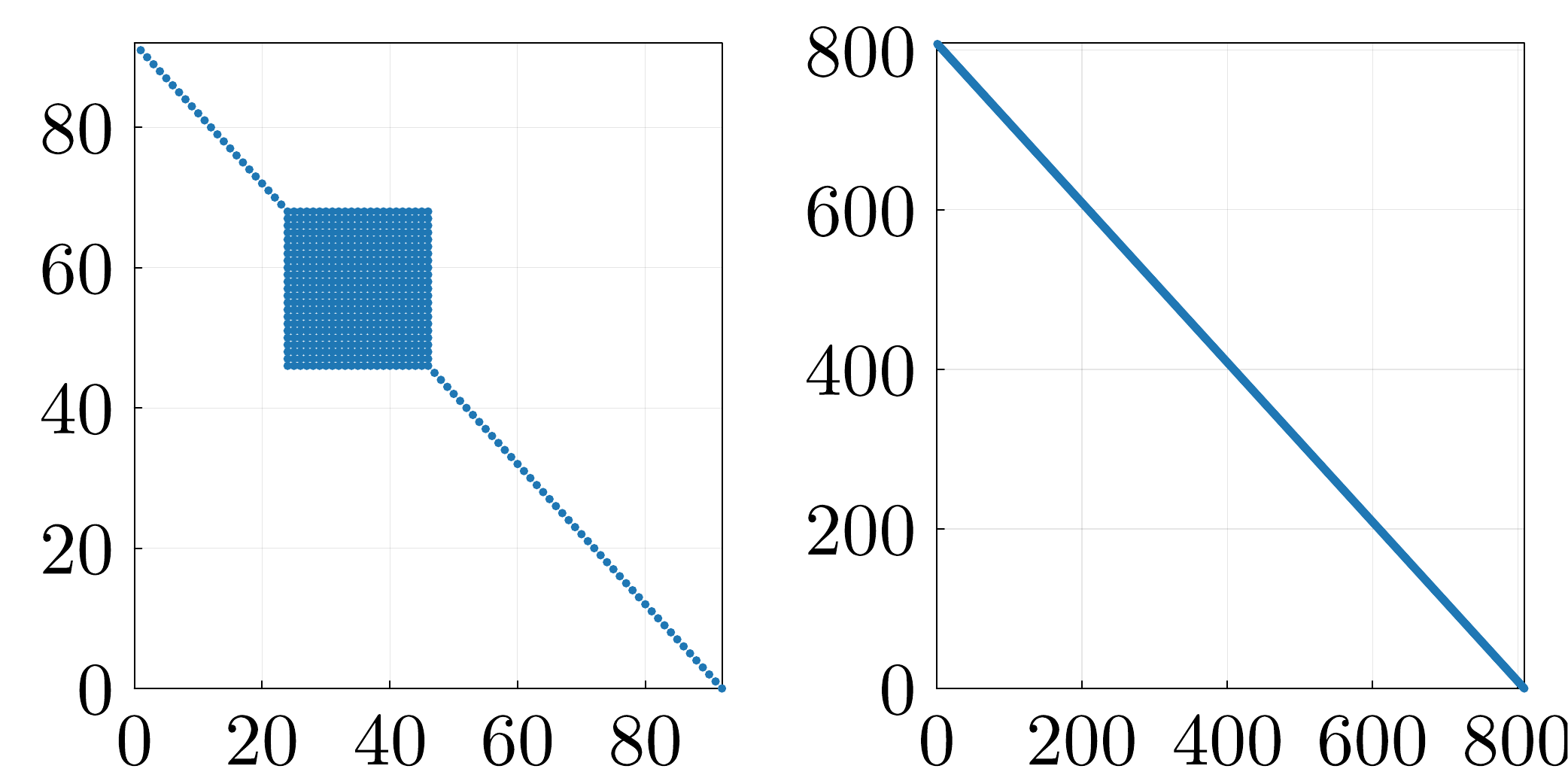}
	\caption{Sparsity patterns of $\nabla_{yy}^2 \phi_2 (y)$ for the HVAC problem with $|\mathcal S| = 4$  (left) and the OPF problem with $|\mathcal S| = 29$ (right).}
	\label{fig:sparsityplots}
\end{figure}
\begin{figure}
	\centering
   \begin{subfigure}[b]{.6\textwidth}
	\centering
\includegraphics[width=\linewidth]{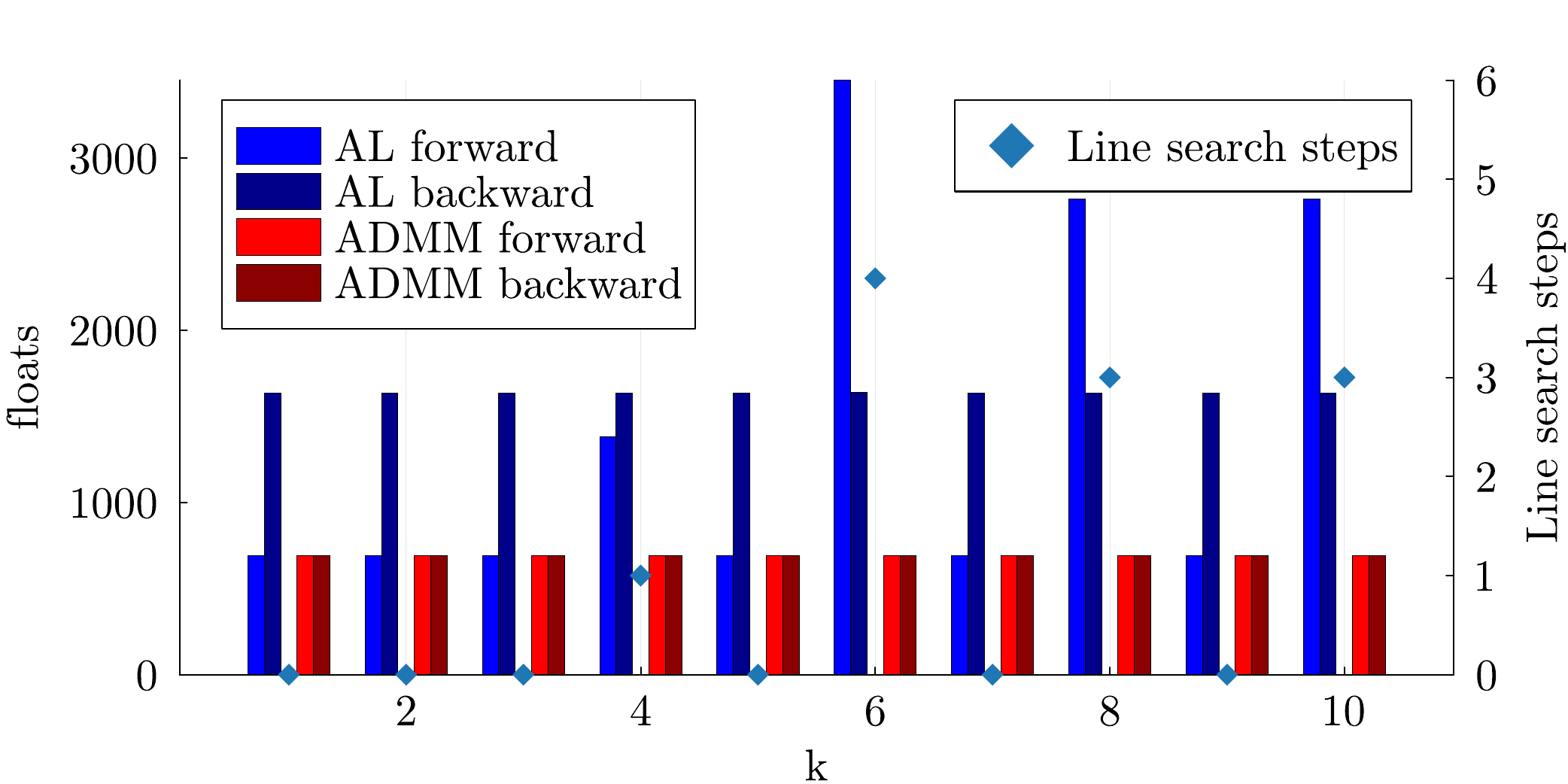}
\caption{ HVAC problem  with $|\mathcal S| = 30$.}
\label{fig:commplotshvac30}
\end{subfigure}
\hfill
\begin{subfigure}[b]{.6\textwidth}
	\centering
	\vspace{-.07cm}
\includegraphics[width=\linewidth]{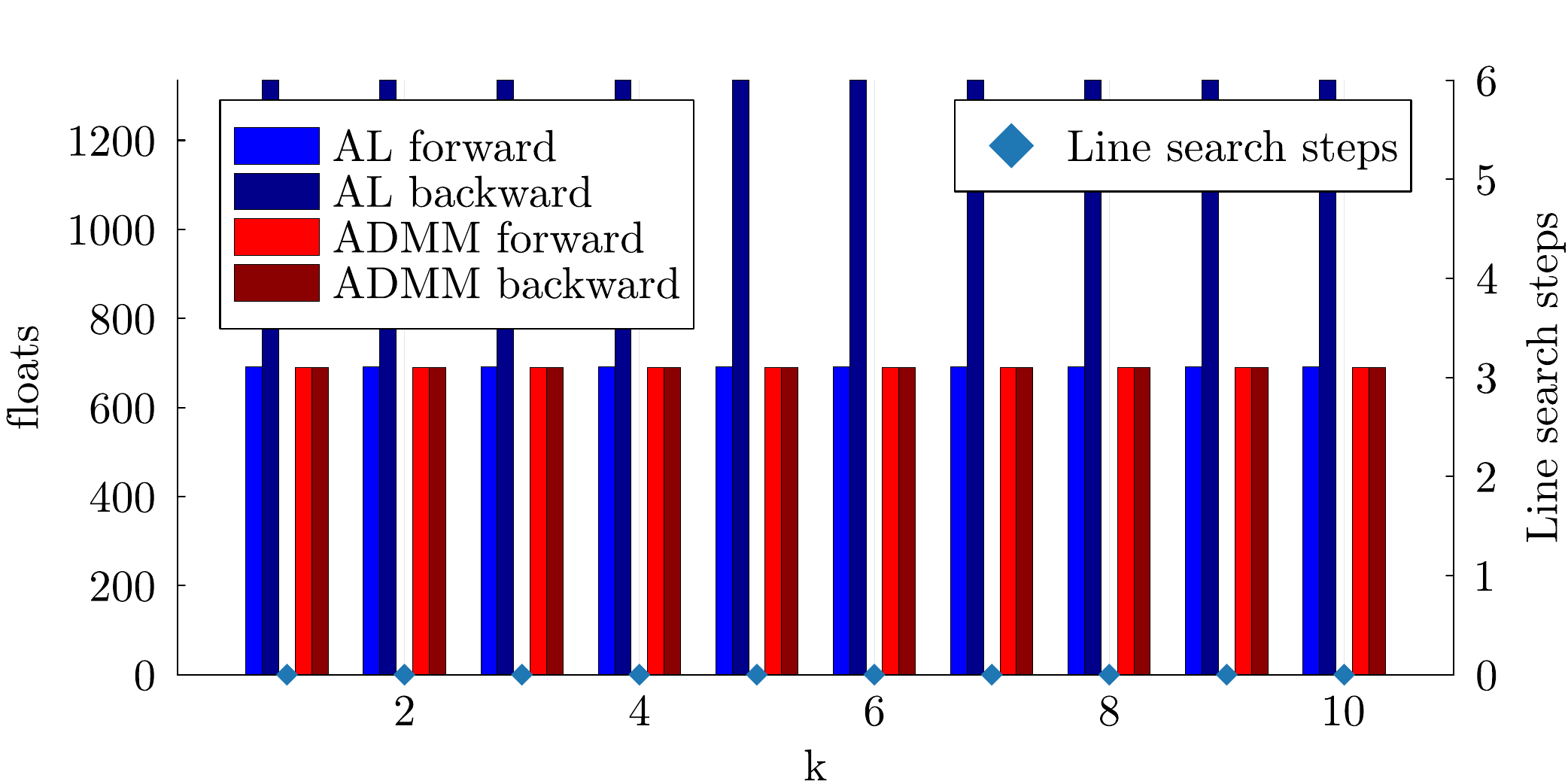}
\label{fig:commplotsdcopf29}
\caption{ OPF problem  with $|\mathcal S| = 29$.}
\end{subfigure}
\vspace{-.1cm}
\caption{Communication (\#floats) for one subsystem.}
\label{fig:Comm}
\end{figure}

\subsubsection*{Decentralization}
If decentralization is required, i.e. avoiding central coordination, ADMM is advantageous since decentralization is straight-forward.
Important properties of primal decomposition and ADMM are summarized in \autoref{tab:compADMMPrimDec}.

\section{Conclusion and Outlook}
We have presented two primal decomposition schemes to solve large-scale QPs for the operation of infrastructure networks.
The developed methods are proven to converge globally to the optimal solution.
Numerical experiments have demonstrated their potential for solving large-scale QPs in a  small number of iterations to a high degree of feasibility and optimality, which distinguishes them from classical distributed methods such as ADMM.
Moreover, we have shown that primal decomposition based on augmented Lagrangians has numerical benefits compared to the classical $\ell$1-formulation.

Future work will further improve implementation aspects of the developed primal decompositions schemes. Sparse backsolves or quasi-Newton Hessian approximations have the potential to greatly accelerate Hessian computation.

 \appendix

\section{Sensitivities  for   Augmented Lagrangians } \label{sec:QPsens}
Observe that for computing $\nabla_y \Phi_i^\delta$ and $\nabla_{yy}^2 \Phi_i^\delta$ in \eqref{eq:nablaYPhi} and~\eqref{eq:Hess}, the partial derivatives of the implicit function $F_i^\delta$ and $L_i$ are required.
Next, we derive these quantities for the two relaxed local problems \eqref{eq:subProbBarrQP2} and \eqref{eq:subProbBarrQP3}.

For \eqref{eq:subProbBarrQP2}, the Lagrangian (omitting  arguments) reads
\begin{align*}
	L_i^{\delta,\rho} \hspace{-.3em}\doteq&\hspace{-.0em}\frac{1}{2}  \hspace{-.3em}
	\begin{bmatrix}
		x_i \\  y \\z_i
	\end{bmatrix}^{\hspace{-.4em}\top}
	\hspace{-.5em}
	\begin{bmatrix}
		H_i^{xx} & H_i^{xy} &0 \\
		H_i^{xy\top} & \hspace{-.6em}H_i^{yy} + \rho I\hspace{-.6em}& -\rho I \\
		0 & -\rho I & \rho I
	\end{bmatrix}
	\hspace{-.4em}
	\begin{bmatrix}
		x_i \\ y \\z_i
	\end{bmatrix}
	\hspace{-.3em}
	+
	\hspace{-.3em}
	\begin{bmatrix}
		h_i^x \\ h_i^y + \lambda_i^k \\ -\lambda^k_i
	\end{bmatrix}^{\hspace{-.4em}\top}
	\hspace{-.5em}
	\begin{bmatrix}
		x_i \\  y \\z_i
	\end{bmatrix} \\
	&- \delta \mathds 1^\top \ln (s_i)
	+ \gamma_i^\top
	\left (
	\begin{bmatrix}
		A_i^x & A_i^y
	\end{bmatrix}
	\begin{bmatrix}
		x_i^\top & z_i^\top
	\end{bmatrix}^\top
	-b_i
	\right )\\
	& +
	\mu_i^\top
	(
	\begin{bmatrix}
		B_i^x & B_i^y
	\end{bmatrix}
	\begin{bmatrix}
		x_i^\top & z_i^\top
	\end{bmatrix}^\top + s_i -d_i
	).
\end{align*}
Hence, the local KKT conditions read
\begin{align*}
	T_i^{\delta,\rho}(q_i^\star,y)	\hspace{-.2em} \doteq 	\hspace{-.3em}
	\begin{bmatrix}
		H_i^{xx} x_i^\star + H_i^{xy} y + h_i^x + A_i^{x\top} \gamma_i^\star + B_i^{x\top} \mu_i^\star \\
		\rho( z_i^\star - y) - \lambda_i^k +A_i^{y\top} \gamma^\star_i + B_i^{y\top } \mu_i^\star \\
		- (S_i^\star)^{-1}\delta  \mathds 1 +  \mu_i^\star  \\
		\begin{bmatrix}
			A_i^x & A_i^y
		\end{bmatrix}
		\begin{bmatrix}
			x_i^{\star \top} & z_i^{ \star \top}
		\end{bmatrix}^\top -b_i \\
		\begin{bmatrix}
			B_i^x & B_i^y
		\end{bmatrix}
		\begin{bmatrix}
			x_i^{\star\top} & z_i^{ \star \top}
		\end{bmatrix}^\top + s_i^\star -d_i
	\end{bmatrix}
	\hspace{-.4em}
	=	\hspace{-.1em}0,
\end{align*}
where $q_i^\top \doteq \left [x_i^\top, z_i^\top, s_i^\top, \gamma_i^\top, \mu_i^\top\right ]$.
Moreover,
\begin{align}
	\nabla_{q_i} T_i^{\delta,\rho} (q_i,y)
	&=
	\begin{bmatrix}
		H_i^{xx} & 0 & 0 &A_i^{x \top } &  B_i^{x \top } \\
		0 &  \rho I & 0 & A_i^{y\top } & B_i^{y\top}\\
		0 & 0 & \hspace{-.3em} S_i^{-1}M_i \hspace{-.3em} & 0 & I \\
		A_i^x & A_i^y & 0 & 0 & 0  \\
		B_i^x &  B_i^y & I & 0 & 0
	\end{bmatrix},\\
	\nabla_{y} T_i^{\delta,\rho} (q_i,y) \label{eq:nablayT}
	&=
	\begin{bmatrix}
		H_i^{xy\top} &
		-\rho I &
		0 &
		0 &
		0
	\end{bmatrix}^\top,
\end{align}
where $M_i = \operatorname{diag}(\mu_i)$.
Moreover,  by \eqref{eq:nablaYPhi},
$
	\nabla_y 	\Phi_i^{\delta,\rho}(y) =  \nabla_y L_i^{\delta,\rho}(q_i^\star(y);y) =  (H_i^{yy} + \rho I )y  + H_i^{xy}x_i^\star + h_i^y + \lambda_i^k - \rho z_i^\star.
$
Furthermore, by \eqref{eq:Hess},
\begin{align}  \label{eq:HessAugLag}
	\nabla_{yy}^2 	\Phi_i^{\delta,\rho}(y) &\hspace{-.2em}= \hspace{-.2em}H_i^{yy} \hspace{-.2em}+\hspace{-.1em} \rho I \hspace{-.1em}+\hspace{-.1em} [H_i^{xy\top} -\hspace{-.2em}\rho I \;\;  0 \;\; 0 \;\; 0]\, \nabla_y q_i^\star (y),\hspace{-.3em}
\end{align}
where $q_i^\star (y)$ is computed by the system of linear equations \begin{align}\label{eq:impFunT}
	\nabla_{q_i} T_i^{\delta,\rho}(q_i^\star,y)\, \nabla_y q_i^\star (y)= -\nabla_{y} T_i^{\delta,\rho} (q^\star_i,y)
\end{align}

\subsubsection*{Precomputation for Hessian Evaluation}
Next, we show how to  precompute matrices to make \eqref{eq:impFunT} easier to solve, cf. \cite[Sec IV]{Pacaud2022a}.
We assume that $H_i^{xx}$ is invertible---if this is not the case, one can use the variant without precomputation.
Recall that by \eqref{eq:HessAugLag}, we need to compute $ H_i^{xy\top} \nabla_y x_i^\star  -\rho  \nabla_y z_i^\star$, where $( \nabla_y x_i^\star, \nabla_y z_i^\star)$ are given by \eqref{eq:impFunT}:
\begin{align} \label{eq:KKTsysAL}
	\begin{bmatrix}
		H_i^{xx} & 0 & 0 &A_i^{x \top } &  B_i^{x \top } \\
		0 &  \rho I & 0 & A_i^{y\top } & B_i^{y\top}\\
		0 & 0 & \hspace{-.8em} S_i^{\star -1}M_i^\star\hspace{-.8em} & 0 & I \\
		A_i^x & A_i^y & 0 & 0 & 0  \\
		B_i^x &  B_i^y & I & 0 & 0
	\end{bmatrix}
\hspace{-.5em}
	\begin{bmatrix}
		\nabla_y x^\star_i \\
		\nabla_y z^\star_i \\
		\nabla_y s^\star_i \\
		\nabla_y \gamma^\star_i \\
		\nabla_y \mu^\star_i \\
	\end{bmatrix}
\hspace{-.4em}
	=
	\hspace{-.4em}
	\begin{bmatrix}
		-H_i^{xy} \\
		\rho I \\
		0 \\
		0 \\
		0
	\end{bmatrix} \hspace{-.2em}.
\end{align}
By the third block-row, we have that $\nabla_y s^\star_i = - M^{\star-1}_i S_i^\star \nabla_y \mu^\star_i $.
This yields
\begin{align*}
	\begin{bmatrix}
		H_i^{xx} & 0  &A_i^{x \top } &  B_i^{x \top } \\
		0 &  \rho I  & A_i^{y\top } & B_i^{y\top}\\
		A_i^x & A_i^y  & 0 & 0  \\
		B_i^x &  B_i^y  & 0 &  -M^{\star-1}_i S_i^\star
	\end{bmatrix}
	\begin{bmatrix}
		\nabla_y x^\star_i \\
		\nabla_y z^\star_i \\
		\nabla_y \gamma^\star_i \\
		\nabla_y \mu^\star_i \\
	\end{bmatrix}
	=
	\begin{bmatrix}
		-H_i^{xy} \\
		\rho I \\
		0 \\
		0
	\end{bmatrix}.
\end{align*}
Since $H_i^{xx}$ is invertible, we have
\begin{align} \label{eq:preSchurComp}
	\begin{bmatrix}
		\nabla_y x^\star_i \\
		\nabla_y z^\star_i \\
	\end{bmatrix}
\hspace{-.4em}
	=
	\hspace{-.4em}
\underbrace{	\begin{bmatrix}
		H_i^{xx} \hspace{-1.0em} & 0   \\
		0 &  \rho I
\end{bmatrix}^{-1}}_{\doteq P_i^{-1}}
\hspace{-.3em}
	\Bigg [
	\hspace{-.3em}
	\begin{bmatrix}
		-H_i^{xy} \\
		\rho I
	\end{bmatrix}
\hspace{-.4em}
	-
	\hspace{-.4em}
	\underbrace{	\begin{bmatrix}
		A_i^x \hspace{-.3em}& A_i^y   \\
		B_i^x \hspace{-.3em}&  B_i^y
	\end{bmatrix}^{\hspace{-.3em}\top}}_{\doteq K_i^\top}
\hspace{-.4em}
	\begin{bmatrix}
		\nabla_y \gamma^\star_i \\
		\nabla_y \mu^\star_i \\
	\end{bmatrix}
\hspace{-.3em}
	\Bigg]
	\hspace{-.2em}.
\end{align}
Employing the Schur complement with respect to the first two block rows yields
\begin{align}
	\Bigg [\hspace{-.1em}
	K_i \notag
P_i^{-1} \hspace{-.1em}
	K_i^\top
\hspace{-.4em}	+\hspace{-.2em}
\underbrace{	\begin{bmatrix}
		0 & 0  \\
		0 & \hspace{-.7em} M^{\star-1}_i S_i^\star\hspace{-.1em}
	\end{bmatrix}}_{\doteq W_i}\hspace{-.4em}
	\Bigg ]
	\hspace{-.4em}
	\begin{bmatrix}
		\nabla_y \gamma^\star_i \\
		\nabla_y \mu^\star_i \\
	\end{bmatrix}
\hspace{-.4em}	=\hspace{-.4em}
	\underbrace{	\begin{bmatrix}
			-	A_i^{x  } H_i^{xx-1} H_i^{xy} \hspace{-.1em}+ \hspace{-.1em} A_i^{y  } \\
			B_i^{x } H_i^{xx-1} H_i^{xy}  \hspace{-.1em}+ \hspace{-.1em}B_i^{y} \\
	\end{bmatrix}}
	_{\doteq R_i}\hspace{-.3em}. \notag
\end{align}
Observe that $H_i^{xx-1}, K_i$ and $R_i$ can be precomputed.
Moreover, the above system of linear equations has significantly less decision variables compared to \eqref{eq:impFunT} and is in addition positive definite under Assumption~\autoref{ass:stdAss}.
This allows to use the Cholesky or Bunch-Kaufmann~(LDL) factorization  instead of~LU.
\Aee{
\subsubsection*{Precomputation for the Local Solvers}
Similar to the above, we can use precomputation for speeding up the solution of the local optimization problems \eqref{eq:subProbBarrQP2} in case interior-point methods are used.
Here, we need to compute Newton steps $	\nabla_{q_i} T_i^{\delta,\rho} (q_i,y) \,\Delta q_i = - 	T_i^{\delta,\rho}(q_i,y)$, i.e.
\begin{align} \label{eq:KKTsysAL2}
	&\begin{bmatrix}
		H_i^{xx} & 0  &A_i^{x \top } &  B_i^{x \top } \\
		0 &  \rho I  & A_i^{y\top } & B_i^{y\top}\\
		A_i^x & A_i^y  & 0 & 0  \\
		B_i^x &  B_i^y  & 0 &  \hspace{-.8em} -M^{-1}_i S_i
	\end{bmatrix}
\hspace{-.4em}
	\begin{bmatrix}
		\Delta x_i \\
		\Delta z_i \\
		\Delta \gamma_i \\
		\Delta \mu_i \\
	\end{bmatrix}
	= \\
&	- \hspace{-.4em}
	\begin{bmatrix}
		H_i^{xx} x_i + H_i^{xy} y + h_i^x + A_i^{x\top} \gamma_i + B_i^{x\top} \mu_i \\
		\rho( z_i - y) - \lambda_i^k +A_i^{y\top} \gamma_i + B_i^{y\top } \mu_i \\
		\begin{bmatrix}
			A_i^x \hspace{-.4em}& A_i^y
		\end{bmatrix}
		\begin{bmatrix}
			x_i^{ \top} \hspace{-.4em}& z_i^{  \top}
		\end{bmatrix}^\top -b_i \\
		\begin{bmatrix}
			B_i^x  \hspace{-.4em}& B_i^y
		\end{bmatrix}
	 \hspace{-.2em}
		\begin{bmatrix}
			x_i^{\top}  \hspace{-.4em}& z_i^{ \top}
		\end{bmatrix}^\top  \hspace{-.6em} + \hspace{-.1em} s_i  \hspace{-.1em}- \hspace{-.1em}d_i  \hspace{-.1em}+ \hspace{-.1em}M^{-1}_i(\delta \mathds{1} - S_i\mu_i)
	\end{bmatrix} \notag
	 \hspace{-.4em}\doteq \hspace{-.4em}
	\begin{bmatrix}
		g_1\\ g_2\\ h_1 \\h_2
	\end{bmatrix}  \hspace{-.3em},
\end{align}
where we have eliminate the third block-row via  $\Delta s_i = - M^{-1}_i (S_i \Delta \mu_i -\delta \mathds 1 + S_i \mu_i)$.
Solving for the first two block-rows yields
\begin{align*}
	\begin{bmatrix}
		\Delta x_i \\
		\Delta z_i \\
	\end{bmatrix}
	=
	P_i^{-1}
	\left (
	\begin{bmatrix}
		g_1 \\
		g_2
	\end{bmatrix}
	-
K_i^\top
	\begin{bmatrix}
		\Delta \gamma_i \\
		\Delta \mu_i \\
	\end{bmatrix}
	\right),
\end{align*}
where $P_i^{-1}$ and $K_i$ are from \eqref{eq:preSchurComp}.
Again, employing the Schur-complement with respect to the first two block rows yields
\begin{align*}
	&\left  (
K_i
P_i^{-1}
K_i^\top
+
W_i
\right  )
\hspace{-.4em}
	\begin{bmatrix}
		\Delta \gamma_i \\
		\Delta \mu_i \\
	\end{bmatrix}  =
	K_i
P_i^{-1}
	\begin{bmatrix}
		g_1 \\
		g_2 \\
	\end{bmatrix}
	-
	\begin{bmatrix}
		h_1 \\
		h_2
	\end{bmatrix}.
\end{align*}
Observe that the above is again  a small system of linear equations with positive definite coefficient matrix, which allows for using Cholesky factorization and  precomputed $H_i^{xx-1}$ and~$K_i$.
}

\section*{Sensitivities for the $\ell1$ Formulation}
The Lagrangian to \eqref{eq:subProbBarrQP3} reads
\begin{align*}
	L_i &=  	\frac{1}{2}
	\begin{bmatrix}
		x_i \\ z_i
	\end{bmatrix}^\top
	\begin{bmatrix}
		H_i^{xx} & H_i^{xy} \\
		H_i^{xy\top} & H_i^{yy}
	\end{bmatrix}
	\begin{bmatrix}
		x_i \\ z_i
	\end{bmatrix}
	+
	\begin{bmatrix}
		h_i^x \\ h_i^y
	\end{bmatrix}^\top
	\begin{bmatrix}
		x_i \\ z_i
	\end{bmatrix} \\
	&	+ \bar \lambda   \mathds 1^\top (v_i+w_i)
	-\delta  (\mathds 1^\top\ln(s_i) + \mathds 1^\top\ln(v_i) +\mathds 1^\top\ln(w_i) ) 	\\
	&	 + \chi_i^\top \hspace{-.1em}(y-z_i - v_i + w_i)
	+ \gamma_i^\top \hspace{-.3em}\left ( \hspace{-.1em}
	\begin{bmatrix}
		A_i^x & A_i^y
	\end{bmatrix}
	\hspace{-.2em}
	\begin{bmatrix}
		x_i^\top & z_i^\top
	\end{bmatrix}^\top
	\hspace{-.5em}-b_i 	\hspace{-.1em}\right ) \\
	&	+
	\mu_i^\top
	(
	\begin{bmatrix}
		B_i^x & B_i^y
	\end{bmatrix}
	\hspace{-.1em}
	\begin{bmatrix}
		x_i^\top & z_i^\top
	\end{bmatrix}^\top + s_i -d_i
	).
\end{align*}
Hence, the KKT conditions require
\begin{align*}
		T_i^{\delta,\bar \lambda}(u_i^\star,y) \doteq 
	\begin{bmatrix}
		H_i^{xx} x_i^\star + H_i^{xy} z_i^\star + h_i^x + A_i^{x\top} \gamma_i^\star + B_i^{x\top} \mu_i^\star \\
		H_i^{yy}z_i^\star + H_i^{xy\top }x_i^\star + h_i^y -\chi_i^\star  +A_i^{y\top} \gamma^\star_i + B_i^{y\top } \mu_i^\star \\
		-S_i^{-1}\delta  \mathds 1 + \mu_i^\star  \\
		-\delta V_i^{-1}  \mathds 1 + (\bar \lambda \mathds 1 - \chi_i^\star)  \\
		-\delta W_i^{-1}  \mathds 1 + (\bar \lambda \mathds 1 + \chi_i^\star)  \\
		\begin{bmatrix}
			A_i^x & A_i^y
		\end{bmatrix}
		\begin{bmatrix}
			x_i^{\star \top} & z_i^{ \star \top}
		\end{bmatrix}^\top -b_i \\
		\begin{bmatrix}
			B_i^x & B_i^y
		\end{bmatrix}
		\begin{bmatrix}
			x_i^{\star\top} & z_i^{ \star \top}
		\end{bmatrix}^\top + s_i^\star -d_i \\
		y -z_i^\star - v_i^\star + w_i^\star
	\end{bmatrix}
	\overset{	!}{=}0,
\end{align*}
where $u_i^\top \doteq \left [x_i^\top, z_i^\top, s_i^\top,v_i^\top,w_i^\top, \gamma_i^\top, \mu_i^\top ,\chi_i^\top \right ]$.
Thus,
\begin{align} \label{eq:KKTsl12}
	\nabla_{u_i} T_i^{\delta,\bar \lambda }(u_i,y)
	=  
	\begin{bmatrix}
		H_i^{xx}  \hspace{-.6em}& H_i^{xy} \hspace{-.6em} & 0 & 0 & 0 & A_i^{x \top } \hspace{-.8em} &  B_i^{x \top }\hspace{-.6em} & 0 \\
		H_i^{xy\top}  \hspace{-.6em}\hspace{-.6em}& H_i^{yy}  \hspace{-.6em}& 0 & 0 & 0 &  A_i^{y\top }  \hspace{-.8em}& B_i^{y\top}\hspace{-.6em} & -I\\
		0 & 0 & \hspace{-.9em} S_i^{-1}M_i \hspace{-.5em} & 0 & 0 &  0 & I & 0\\
		0 & 0 & 0 & \hspace{-.2em}V_i^{-1}(\bar \lambda I-X_i) \hspace{-1em}& 0 &  0 & 0 & -I\\
		0 & 0 & 0 & 0 &\hspace{-1em}W_i^{-1}(\bar \lambda I + X_i)\hspace{-1em} &  0 & 0 & I\\
		A_i^x & A_i^y & 0 & 0 & 0  & 0 & 0 & 0\\
		B_i^x &  B_i^y & I & 0 & 0 & 0 & 0 & 0 \\
		0 & -I & 0 & -I & I &  0 & 0 & 0
	\end{bmatrix} , 
\end{align}
where  $V_i = \operatorname{diag}(v_i)$,   $W_i = \operatorname{diag}(w_i)$, and $X_i = \operatorname{diag}(\chi_i)$.
Moreover,
\begin{align} \label{eq:nablaYT}
	\nabla_{y} T_i^{\delta,\bar \lambda }(u_i,y)
=[	0 \;\; 0 \;\; 0 \;\; 0 \;\; 0 \;\; 0 \;\; 0 \;\; I]^\top.
\end{align}
Furthermore, $\nabla_y 	\Phi_i^{\delta,\bar \lambda}(y)=\nabla_{y}L_i =  \chi_i$, $\nabla_{yy}L_i = 0$, and $  \nabla_{yu_i^\star}L_i= [0\;\; 0 \;\; 0 \;\; 0\;\; 0 \;\; 0\;\; 0 \;\; I]$.
Thus, by \eqref{eq:Hess},
\begin{align} \label{eq:Hess2QP4}
	\nabla_{yy} 	\Phi_i^{\delta,\bar \lambda}(y) = \nabla _y\chi_i^\star (y).
\end{align}

\section{Proof of Lemma~\autoref{lem:posDefHess}} \label{sec:ProofPosDef}
	First, we will show that $Z^\top(ZCZ^\top)^{-1}Z = Z^\top ZC^{-1}Z^\top Z$ for a regular, symmetric $C\in \mathbb{R}^{n\times n}$,  $Z \in \mathbb{R}^{m\times n}$ with $m<n$.
	Consider a re-ordered eigendecomposition $C = Q \Lambda Q^\top$ and partition $Q = [Q_{1} \;\;Q_{2}]$, $\Lambda_i = \operatorname{blkdiag}(\Lambda_{1},\Lambda_{2})$ such that $Q_{2}$ is a nullspace-basis of $Z$, i.e. $Z Q_{2}=0$.
	Hence, we have $ZCZ^\top = Z[Q_{1} \;\;Q_{2}]\operatorname{blkdiag}(\Lambda_{1},\Lambda_{2})[Q_{1} \;\;Q_{2}]^\top Z_i^\top = ZQ_{1} \Lambda_{1} Q_{1}Z^\top$ since $Z Q_{2}=0$.
	Thus, $Z^\top(ZCZ^\top )^{-1}Z= Z^\top Z Q_1 \Lambda_{1}^{-1}  Q_1^\top Z^\top Z $.
	Again, since  $Z Q_{2}=0$, by expansion, $ Z^\top Z [Q_1 \;\; Q_2]\operatorname{blkdiag}{( \Lambda_{1}^{-1},\Lambda_{2}^{-1})}  [Q_1 \;\; Q_2]^\top Z^\top Z =Z^\top ZC^{-1}Z^\top  Z$.

	Proof of a): By \eqref{eq:HessAugLag}, we need $\nabla_y q_i^\star (y)$ for computing $	\nabla_{yy} 	\Phi_i^{\delta,\rho}$, where  $\nabla_y q_i^\star (y)$ is defined by \eqref{eq:impFunT}.
	Define  $C_i \doteq \operatorname{blkdiag}(H_i^{xx},\rho I,S_i^{-1}M_i)),$ $D_i\doteq 	\begin{bmatrix}
		A_i^x & A_i^y & 0  \\
		B_i^x &  B_i^y & I
	\end{bmatrix}$, and
	$E_i \doteq [H_i^{xy\top} -\hspace{-.2em}\rho I \;\;\;  0]^\top.$
	Consider \eqref{eq:nablayT} and parametrize $(\nabla_y x_i^{\star\top},\nabla_y z_i^{\star\top},\nabla_y s_i^{\star\top})^\top \doteq Z_iP_i$, where $Z_i$ is a nullspace matrix to $D_i$, i.e., the columns of $Z_i$ form an orthogonal basis of the nullspace of $D_i$ and $P_i \in \mathbb{R}^{(n_{xi} + n_y -\operatorname{nr}(A_i^x))\times n_y}$ is an auxiliary matrix.
	Using the above parametrization in \eqref{eq:impFunT} and multiplying with $Z_i^\top$  yields
	$Z_i^\top C_iZ_i P_i= -Z_i^\top E_i$ by $Z_i^\top D_i^\top =0$.
	Since $s_i,\mu_i >0$ and Assumption~1 holds, we have $ C_i \succ 0$ and thus $Z_i^\top C_iZ_i$ is invertible by  full rank of $Z_i$.
	Hence, by~\eqref{eq:HessAugLag} and the above derivation,	$\nabla_{yy} 	\Phi_i^{\delta,\rho}(y) = H_i^{yy} + \rho I - E_i^\top\,Z_i (Z_i^\top C_iZ_i)^{-1}Z_i^\top E_i= H_i^{yy} + \rho I - E_i^\top\,Z_i Z_i^\top C_i^{-1}Z_iZ_i^\top E_i.$
	Notice that $Z_iZ_i^\top$ is a diagonal matrix with $\operatorname{rank}(Z_i)$ ones and $\operatorname{dim}(C_i) - \operatorname{rank}(Z_i)$ zeros.
	Hence, since $C_i$ is positive definite, it suffices to show that $\nabla_{yy} 	\Phi_i^{\delta,\rho}(y) \succ 0$ for the worst case, i.e. $Z_iZ_i^\top =I$ (no constraints).
	Thus, $\nabla_{yy} 	\Phi_i^{\delta,\rho}(y) = H_i^{yy} - H_i^{xy\top}(H_i^{xx})^{-1}H_i^{xy} \succ 0 $ by the definition of $E_i, C_i$, by Assumption~\ref{ass:stdAss}~a) and the Schur-complement Lemma \cite[A.14]{Boyd2004}.

	Proof of b): By \eqref{eq:Hess2QP4}, we need to show that $\nabla _y\chi_i^\star (y) \succ 0$, which can be computed by the system of linear equations  \eqref{eq:KKTsl12}, \eqref{eq:nablaYT}.
	Define $F_i = \operatorname{blkdiag}\bigg  (
	\begin{bmatrix}
		H_i^{xx} & H_i^{xy} \\
		H_i^{xy\top} & H_i^{yy}
	\end{bmatrix},
	S_i^{-1}M_i, V_i^{-1}(\bar \lambda -X_i),W_i^{-1}(\bar \lambda +X_i)
	\bigg  )$ and
	$
	G_i\doteq
	\begin{bmatrix}
		A_i^x & A_i^y & 0 & 0 & 0  & \\
		B_i^x &  B_i^y & I & 0 & 0 & \\
		0 & -I & 0 & -I & I &
	\end{bmatrix}.
	$
	By Assumption~\ref{ass:stdAss}, $s_i,v_i,w_i,\mu_i > 0$, and $\bar \lambda > \max_j{\left |[\chi_i]_j \right |}$, we have that $F_i \succ 0$.
	Hence, $\left (\nabla_y x_i^{\star\top},\nabla_y z_i^{\star\top}, \nabla_y s_i^{\star\top},\nabla_y v_i^{\star\top},\nabla_y w_i^{\star\top} \right )^\top = - F_i^{-1}G_i(\nabla_y \gamma_i^{\star\top},\nabla_y \mu_i^{\star\top}, \nabla_y \chi_i^{\star\top} )$.
	Thus, $G_i^\top F_i^{-1}G_i(\nabla_y \gamma_i^{\star\top},\nabla_y \mu_i^{\star\top}, \nabla_y \chi_i^{\star\top} ) = [0 \;\;0\;\; I]^\top $. Since $F_i^{-1}\succ 0$ and by full rank of $G_i$ from Assumption~\ref{ass:stdAss}, $G_i^\top F_i^{-1}G_i \succ 0$ and thus  $(\nabla_y \gamma_i^{\star\top},\nabla_y \mu_i^{\star\top}, \nabla_y \chi_i^{\star\top} ) = (G_i^\top F_i^{-1}G_i)^{-1}[0 \;\;0\;\; I]^\top $.
	Since $(G_i^\top F_i^{-1}G_i)^{-1} \succ 0$, all leading principle minors of this matrix must be positive definite by Sylvester's criterion \cite[Col 7.1.5]{Horn2013}.
	By variable reordering, the assertion follows.

\section{Solution of \eqref{eq:sepQP} via ADMM} \label{sec:ADMM}
We derive a distributed ADMM version for \eqref{eq:sepQP} as a baseline for numerical comparison.
Consider \eqref{eq:sepQP}, introduce auxiliary variables $z_i \in \mathbb{R}^{n_y}$  and  consensus constraints $y=z_1=\dots=z_S \;|\; \lambda_1,\dots,\lambda_S$.
This yields
\begin{subequations}\label{eq:sepQPADMM}
	\begin{align}
		\min_{x,y,z}\sum_{i \in \mathcal S}
		\frac{1}{2}
		\begin{bmatrix}
			x_i \\ z_i
		\end{bmatrix}^{	\hspace{-.2em}\top}
		\hspace{-.3em}
		\begin{bmatrix}
			H_i^{xx} & H_i^{xy} \\
			H_i^{xy\top}	\hspace{-.4em} & H_i^{yy}
		\end{bmatrix}
		\hspace{-.2em}
		\begin{bmatrix}
			x_i \\ z_i
		\end{bmatrix}&
	\hspace{-.2em}
		+ 	\hspace{-.2em}
		\begin{bmatrix}
			h_i^x \\ h_i^y
		\end{bmatrix}^{	\hspace{-.2em}\top}
		\hspace{-.2em}
		\begin{bmatrix}
			x_i \\ z_i
		\end{bmatrix} \label{eq:objADMM}
		\\
		\text{subject to } \;
		\begin{bmatrix}
			A_i^x & A_i^y
		\end{bmatrix}
		\begin{bmatrix}
			x_i^\top & z_i^\top
		\end{bmatrix}^\top - b_i &= 0,\;\; i \in \mathcal S, \\
		\begin{bmatrix}
			B_i^x & B_i^y
		\end{bmatrix}
		\begin{bmatrix}
			x_i^\top & z_i^\top
		\end{bmatrix}^\top -d_i&\leq 0, \;\;i \in \mathcal S, \label{eq:ineqADMM}\\
		A^y y -b^y = 0, \quad
		B^y y -d^y &\leq  0\label{eq:ineqyADMM}\\
		z_i &= y, \;\; i \in \mathcal{S}.
	\end{align}
\end{subequations}
The augmented Lagrangian with respect to $y-z_i=0$  reads
\begin{align*}
	L^\rho =  \sum_{i \in \mathcal S} \phi_i^x(x_i,z_i) + \phi^y(y)   + \lambda_i^{k\top}(y-z_i) + \frac{\rho}{2} \|y-z_i\|^2,
\end{align*}
where $\phi_i^x$ are defined by \eqref{eq:objADMM}-\eqref{eq:ineqADMM} and $\phi^y$ is the indicator function for \eqref{eq:ineqyADMM}.
Minimizing $L^\rho$ w.r.t. $(x_i,z_i)$ for fixed $(y^k,\lambda_i^k)$ yields for all $i \in \mathcal S$
\begin{align} \label{eq:ADMMlocMin}
	(x_i^{k+1},z_i^{k+1}) = 
&	\arg \min_{x_i,z_i} 
	\frac{1}{2}
	\begin{bmatrix}
		x_i \\ z_i
	\end{bmatrix}^{\hspace{-.2em}\top}
	\begin{bmatrix} \notag
		H_i^{xx} & H_i^{xy} \\
		H_i^{xy\top}\hspace{-.3em} & \hspace{-.5em}H_i^{yy} + \rho I
	\end{bmatrix}
	\begin{bmatrix}
		x_i \\ z_i
	\end{bmatrix}
	+
	\begin{bmatrix}
		h_i^x \\ h_i^y \hspace{-.2em}- \hspace{-.2em}\lambda_i^k\hspace{-.2em} -\hspace{-.2em} \rho y^k
	\end{bmatrix}^{\hspace{-.2em}\top}
	\begin{bmatrix}
		x_i \\ z_i
	\end{bmatrix}
	\\
	&
\begin{aligned}
		\text{subject to } \quad
	\begin{bmatrix}
		A_i^x & A_i^y
	\end{bmatrix}
	\begin{bmatrix}
		x_i^\top & z_i^\top
	\end{bmatrix}^\top - b_i &= 0, \\
	\begin{bmatrix}
		B_i^x & B_i^y
	\end{bmatrix}
	\begin{bmatrix}
		x_i^\top & z_i^\top
	\end{bmatrix}^\top -d_i&\leq 0.
\end{aligned}
\end{align}
Minimising $L^\rho$ w.r.t. $y$ for fixed  $(x_i^{k+1},z_i^{k+1},\lambda_i^k)$ yields
\begin{align} \label{eqADMMcons}
	y^{k+1} = \arg \min_{y}\;\sum_{i \in \mathcal S} \frac{\rho}{2}  \;y^\top  y &+(\lambda_i^k -\rho z_i^{k+1})^\top y
	\\
	\text{subject to } \quad
	A^y y -b^y &= 0, \quad B^y y -d^y \leq  0.\notag
\end{align}
Finally, the Lagrange multiplier update reads
\begin{align}
	\lambda_i^{k+1} = \lambda_i^k + \rho(y^{k+1}-z_i^{k+1}), \qquad i \in \mathcal{S}. \label{eq:ADMMlamUp}
\end{align}
The update rules \eqref{eq:ADMMlocMin}-\eqref{eq:ADMMlamUp} define the ADMM iterations.
Note that \eqref{eq:ADMMlocMin} and \eqref{eq:ADMMlamUp} can be executed locally for all $i \in \mathcal S$, whereas \eqref{eqADMMcons} defines the global coordination step.

\printbibliography

\end{document}